\theoremstyle{plain}
\newtheorem{theorem}{Theorem}
\newtheorem{lemma}[theorem]{Lemma}
\newtheorem{corollary}[theorem]{Corollary}
\theoremstyle{definition}
\newtheorem{notation}[theorem]{Notation}
\newtheorem{convention}[theorem]{Convention}
\newtheorem{definition}[theorem]{Definition}
\newtheorem{remark}[theorem]{Remark}
\newtheorem{example}[theorem]{Example}
\newcommand{\PCF}{\mbox{${\bf PCF}^{\bf list}$}}
\newcommand{\bit}{{\tt bit}}
\newcommand{\bool}{{\tt bool}}
\newcommand{\cnotgate}[2]{{\tt N}({#1}\cdot{#2})}
\newcommand{\notgate}[1]{{\tt N}({#1})}
\newcommand{\wires}[1]{{\it Wires}({#1})}
\newcommand{\define}[1]{{\em #1}}
\newcommand{\bor}{\,{|}\,}
\newcommand{\listtype}[1]{{[}{#1}{]}}
\newcommand{\unittype}{{\tt 1}}
\newcommand{\tunit}{\unittype}
\newcommand{\lettermx}[3]{{{\it let}~{#1}~=~{#2}~{\it in}~{#3}}}
\newcommand{\punit}{\mathtt{skip}}
\newcommand{\prodterm}[1]{{\langle{#1}\rangle}}
\newcommand{\letprodterm}[3]{{{\it let}~\prodterm{#1}~=~{#2}~{\it in}~{#3}}}
\newcommand{\letunitterm}[2]{{#1}\mathtt{;}{#2}}
\newcommand{\ttrue}{\mathtt{t\!t}}
\newcommand{\ffalse}{\mathtt{f\!f}}
\newcommand{\iftermx}[3]{{{\mathtt{if}}~{#1}~\mathtt{then}~{#2}~\mathtt{else}~{#3}}}
\newcommand{\inj}{{\tt inj}}
\newcommand{\injl}{\inj_1}
\newcommand{\injr}{\inj_2}
\newcommand{\match}[5]{{{\mathtt{match}}~{#1}~{\mathtt{with}}~({#2\,\mapsto\,#3}\vert{#4\,\mapsto\,#5})}}
\newcommand{\letrec}[4]{{{\mathtt{letrec}}~{#1}\,{#2}={#3}~{\mathtt{in}}~{#4}}}
\newcommand{\errorlist}{{\tt Err}}
\newcommand{\nil}{{\mathtt{nil}}}
\newcommand{\cons}[3][]{{{#2}\,{\mathtt :}{\mathtt :}^{#1}\,{#3}}}
\newcommand{\splitlist}[1][]{{\mathtt{split}^{#1}}}
\newcommand{\emptyam}{\textrm{{\it EmptyAM}}}
\newcommand{\rwam}{\mathrel{\to_{\it am}}}
\newcommand{\cbvam}{\mathrel{\to_{\it cbv}}}
\newcommand{\monadreturn}{{\tt return}}
\newcommand{\monadapp}{{\tt app}}
\newcommand{\Lift}[1]{{\it Lift}_{\mathcal{#1}}}
\newcommand{\wiretype}{{\tt wire}}
\newcommand{\monadttrue}{{\tt mtt}}
\newcommand{\monadffalse}{{\tt mff}}
\newcommand{\monadif}{{\tt mif}}
\newcommand{\monadand}{{\tt mand}}
\newcommand{\monadxor}{{\tt mxor}}
\newcommand{\monadnot}{{\tt mnot}}
\newcommand{\gatetype}{{\tt gate}}
\newcommand{\statetype}{{\tt state}}
\newcommand{\circtype}{{\tt circ}}
\newcommand{\succterm}{{\bf S}}
\newcommand{\void}[1]{}
\title{Generating reversible circuits from higher-order functional
  programs}
\author{Beno\^{i}t Valiron}
\date{March 20, 2016}
\begin{document}

\maketitle

\begin{abstract}
Boolean reversible circuits are boolean circuits made of reversible
elementary gates. Despite their constrained form, they can simulate
any boolean function. The synthesis and validation of a reversible
circuit simulating a given function is a difficult problem. In 1973,
Bennett proposed to generate reversible circuits from traces of
execution of Turing machines.
In this paper, we propose a novel presentation of this approach,
adapted to higher-order programs. Starting with a PCF-like language,
we use a monadic representation of the trace of execution to turn a
regular boolean program into a circuit-generating code. We show that a
circuit traced out of a program computes the same boolean function as
the original program.
This technique has been successfully applied to generate large oracles
with the quantum programming language Quipper.
\end{abstract}

\section{Introduction}
\label{sec:introduction}

Reversible circuits are linear, boolean circuits with no loops, whose
elementary gates are reversible. 
In quantum computation, reversible circuits are mostly used as oracle:
the {\em description} of the problem to solve. Usually, this
description is given as a classical, conventional algorithm: the graph
to explore~\cite{TF}, the matrix coefficients to process~\cite{QLSA},
{\em etc}. These algorithms use arbitrarily complex structures, and if some
are rather simple, for example~\cite{shor}, others are quite complicated and make use of
analytic functions~\cite{QLSA}, memory registers~\cite{BF} (which
thus have to be simulated), {\em etc}.

This paper\footnote{A shorter preprint has been  accepted for publication in the Proceedings of {\em Reversible Computation 2016}.
The final publication is available at
  \url{http://link.springer.com}.} is concerned with the design of reversible circuits as {\em
  operational semantics} of a higher-order purely functional
programming language. The language is
 expressive enough to encode most algorithms: it features
recursion, pairs, booleans and lists, and it can easily be extended
with additional structures if needed. This operational semantics can be
understood as the {\em compilation} of a program into a reversible
circuit.

Compiling a program into a reversible circuit is fundamentally
different from compiling on a regular back-end: there is no notion of
``loop'', no real control flow, and all branches will be explored
during the execution.  In essence, a reversible circuit is the {\em
  trace} of all possible executions of a given program. Constructing a
reversible circuit out of the trace of execution of a program is what
Bennett~\cite{bennett73logical} proposed in 1973, using Turing
machines. In this paper, we refer to it as Landauer
embeddings~\cite{landauer61irreversibility}.

In this paper, we build up on this idea of circuit-as-trace-of-program
and formalize it into an operational semantics for our higher-order
language. This semantics is given externally as an abstract machine,
and internally, as a {\em monadic interpretation}.

The strength of our approach to circuit synthesis is to be able to
reason on a regular program independently from the constraints of the
circuit construction. The approach we follow is similar to what is
done in Geometry of synthesis~\cite{ghica07gos} for hardware synthesis, but
since the back-end we aim at is way simpler, we can devise a very
natural and compact monadic operational semantics.

\smallskip
\noindent
{\bf Contribution.}
The main contribution of this paper is a monadic presentation of
Landauer embeddings~\cite{landauer61irreversibility} in the context of
higher-order programs. Its main strength is its parametricity: a
program really represents a {\em family} of circuits, parametrized on
the size of the input. Furthermore, we demonstrate a compositional
monadic procedure for generating a reversible circuit out of a
regular, purely functional program. The generated circuit is then
provably computing the same thing as the original program. This can be
used to internalize the generation of a reversible circuit out of a
functional program. It has been implemented in Quipper~\cite{Quipper}
and used for building complex quantum oracles.

\smallskip
\noindent
{\bf Related works.}
From the description of a conventional function it is always possible
to design a reversible circuit computing the function out of its
truth table or properties thereof and several methods have been
designed to generate compact
circuits (see e.g.~\cite{rev-survey-2011,maslov-templates-iccad03,synth4,synth13,synth2,synth17}).
However, if these techniques allow one to write
reversible functions with arbitrary truth tables~\cite{revlib}, they
do not usually scale well as the size of the input grows.

Synthesis of reversible circuits can be seen as a small branch of the
vast area of hardware synthesis. In general, hardware synthesis can be
structural (description of the structure of the circuit) or behavioral
(description of algorithm to encode). Functional programming languages
have been used for both. On the more structural side one finds
Lava~\cite{Claessen-2001}, BlueSpec~\cite{bluespec}, functional
netlists~\cite{park08functional}, {\em etc}.  On the behavioral side
we have the Geometry of Synthesis~\cite{ghica07gos},
Esterel~\cite{esterel}, ForSyDe~\cite{forsyde}, {\em etc}. Two more
recent contributions sitting in between structural and behavioral
approaches are worth mentioning. First, the imperative, {\em
  reversible} synthesis language SyRec~\cite{syrec}, specialized for
reversible circuits. Then, Thomsen's proposal~\cite{thomsen}, allowing
to represent a circuit in a functional manner, highlighting
the behavior of the circuit out of its structure.

On the logic side, the geometry of interaction~\cite{goi} is a
methodology that can be used to turn functional programs into
reversible computation~\cite{abramsky,ghica07gos,terui}: it is based on
the idea of turning a typing derivation into a reversible automaton.

There have also been attempts to design reversible abstract machines
and to compile regular programs into reversible computation. For
example, a reversible version of the SEMCD machine has been
designed~\cite{kluge99}. More recently, the compiler REVS~\cite{revs}
aims at compiling conventional computation into reversible circuits.

Monadic semantics for representing circuits is something relatively
common, specially among the DSL community: Lava~\cite{Claessen-2001},
Quipper~\cite{PLDI}, Fe-Si~\cite{braibant13formal}, {\em etc}. Other
approaches use more sophisticated constructions, with type systems
based on arrows~\cite{James} in order to capture reversibility. 

In the present work, the language is circuit-agnostic, and the
interest of the method lies more in the fact that the monadic
semantics to build reversible circuits is completely {\em implicit}
and only added at circuit-generation time, following the approach
in~\cite{MLmonad}, rather than in the choice of the language. Compared
to~\cite{James}, our approach is also parametric in the sense that a
program does not describe one fixed-size circuit but a family of
circuits, parametrized by the size of the input.

\smallskip
\noindent
{\bf Plan of the paper.}
Section~\ref{sec:reversible-circuits} presents the definition of
reversible circuits and how to perform computation with them.
Section~\ref{sec:progr-revers-circ} describes a PCF-like
lambda-calculus and proposes two operational semantics: one as a
simple beta-reduction and one using an abstract machine and a partial
evaluation procedure generating a
circuit. Section~\ref{sec:internalizing} describes the call-by-value
reduction strategy and explains how to internalize the
abstract-machine within the language using a
monad. Section~\ref{sec:use-cases} discusses the use of this technique
in the context of the generation of quantum oracles, and discusses the
question of optimizing the resulting circuits.  Finally,
Section~\ref{sec:conclusion} concludes and proposes some future
investigations.

\section{Reversible circuits}
\label{sec:reversible-circuits}

\begin{wrapfigure}{r}{0.3\textwidth}
\vspace{-10pt}
\includegraphics[width=0.28\textwidth]{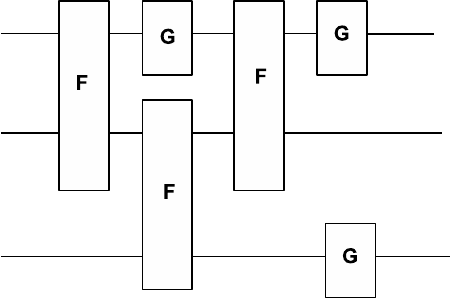}
\vspace{-10pt}
\end{wrapfigure}
A reversible boolean circuit consists in a set of \define{open wires}
and \define{elementary gates} attached onto the wires. Schematically,
a reversible boolean circuit is of the form shown on the right.
To each gate is associated a boolean operation, supposed to be
reversible. In this circuit example, $G$ is a one-bit operation (for
example a not-gate, flipping a bit) while $F$ is a two-bit
operation. In each wire, a bit ``flows'' from left to right. All the
bits go at the same pace. When a gate is met, the corresponding
operation is applied on the wires attached to the gate. Since the
gates are reversible, the overall circuit is reversible by making the
bits flow backward. 

\paragraph{\bf Choice of elementary gates.}

Many gates have been considered in the
literature~\cite{rev-survey-2011}. In this paper, we will consider
multi-controlled-not gates. A not gate, represented by
$
\xymatrix@R=.05in@C=.1in{
  \ar@{-}[r] & *{\oplus} \ar@{-}[r] &~,
}
$
is flipping the value of the wire on which it is attached. The
operator ${\tt not}$ stands for the bit-flip operation. Given a
gate $F$ acting on $n$ wires, a controlled-F is a gate acting on $n+1$
wires. The control can be positive or negative, represented
\begin{wrapfigure}{r}{0.35\textwidth}
\vspace{-10pt}
\begin{minipage}{0.33\textwidth}
  $\xymatrix@R=.05in@C=.1in{
  x\ar@{-}[r] & *{\bullet} \ar@{-}[r]\ar@{-}[d] & x
  \\
  \vec{y}\ar@{=}[r] & *+[F]{F} \ar@{=}[r] & 
}
\quad
\xymatrix@R=.05in@C=.1in{
  x \ar@{-}[r] & *{\circ} \ar@{-}[r]\ar@{-}[d] & x
  \\
  \vec{y} \ar@{=}[r] & *+[F]{F} \ar@{=}[r] &
}$
\end{minipage}
\vspace{-10pt}
\end{wrapfigure}
respectively as shown on the right.
In both cases, the top wire is not modified. On the bottom wires, the
gate $F$ is applied if $x$ is true for the positive control, and
false for the negative control. Otherwise, no gate is applied: the
values $\vec{y}$ flow unchanged through the gate.
A positively-controlled not gate will be denoted CNOT.

A reversible circuit runs a computation on some query: some input
wires correspond to the query, and some output wires correspond to the
answer. The auxiliary input wires that are not part of the query are
initially fed with the boolean ``false'' (also written $0$).

\paragraph{\bf Computing with reversible circuits.}
As described by Landauer\,\cite{landauer61irreversibility} and
Bennett\,\cite{bennett73logical}, a conventional, classical algorithm
that computes a boolean function $f:\bit^n \to \bit^m$ can be mechanically
transformed into a reversible circuit sending the triplet
$(x,\vec{0},\vec{0})$ to $(x,{\rm trace},f(x))$, as in
Figure~\ref{fig:Tf}. Its input wires are not modified by the circuit,
and the trace of all intermediate results are kept in garbage wires.

Because of their particular structure, two Landauer embeddings $T_g$ and
$T_h$ can be composed to give a Landauer embedding of
the composition $h\circ g$. Figure~\ref{fig:TgTh} shows the process:
the wires of the output of $T_g$ are fed to the input of $T_h$, and
the output of the global circuit is the one of $T_h$. The garbage
wires now contain all the ones of $T_g$ and $T_h$.

Note that it is easy to build elementary Landauer embeddings for
negation and conjunction: the former is a negatively-controlled not
while the latter is a positively doubly-controlled not. 
Any boolean function can then be computed with Landauer embeddings.

\def\multiline#1#2#3#4{\save
(#1,#4#2);(#3,#2)**\dir{-};
(#1,#4#2.9999);(#3,#4#2.9999)**\dir{-};
(#1,#4#2.5);(#3,#4#2.5)**\dir[|(0.5)]{-};
(#1,#4#2.25);(#3,#4#2.25)**\dir{-};
(#1,#4#2.75);(#3,#4#2.75)**\dir{-};
(#1,#4#2.625);(#3,#4#2.625)**\dir[|(0.5)]{-};
(#1,#4#2.375);(#3,#4#2.375)**\dir[|(0.5)]{-};
(#1,#4#2.55);(#3,#4#2.55)**\dir[|(0.5)]{-};
(#1,#4#2.45);(#3,#4#2.45)**\dir[|(0.5)]{-};
\restore}
\begin{figure}[tb]
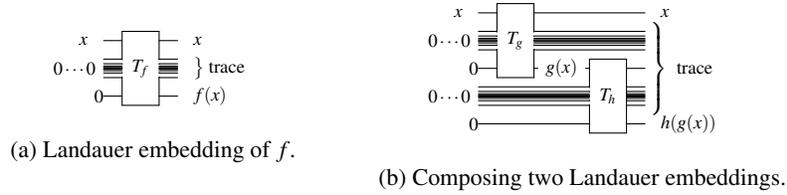

\scriptsize\centering
\begin{subfigure}{.35\textwidth}
  \centering
  $
  \begin{array}{c}
    \xy
    0;<1em,0em>:
    (0,0.5);(2,0.5)**\dir{-};
    (2,2)**\dir{-};
    (2,3)**\dir{.};
    (2,4.5)**\dir{-};
    (0,4.5)**\dir{-};
    (0,3)**\dir{-};
    (0,2)**\dir{.};
    (0,0.5)**\dir{-};
    (1,2.5)*{T_f};
    (0,1);(-1,1)*!R{0}**\dir{-};
    (0,4);(-1,4)*++!R{x}**\dir{-};
    (2,1);(3,1)*++!L{f(x)}**\dir{-};
    (2,4);(3,4)*++!L{x}**\dir{-};
    (-1,2.5)*+!R{{0\cdots0}};
    (3,2.5)*++!L{\begin{array}{@{}c@{}}\big\}~\textrm{trace}\end{array}}
    \multiline{0}{2}{-1}{}
    \multiline{2}{2}{3}{}
    \endxy
  \end{array}
  $
  \caption{Landauer embedding of $f$.}
  \label{fig:Tf}
\end{subfigure}
\qquad
\begin{subfigure}{.5\textwidth}
\centering$
\begin{array}{c}
\xy
0;<1em,0em>:
(0,0.5);(2,0.5)**\dir{-};
(2,2)**\dir{-};
(2,3)**\dir{.};
(2,4.5)**\dir{-};
(0,4.5)**\dir{-};
(0,3)**\dir{-};
(0,2)**\dir{.};
(0,0.5)**\dir{-};
(1,2.5)*{T_g};
(0,1);(-1,1)*!R{{0}}**\dir{-};
(0,4);(-1,4)*++!R{x}**\dir{-};
(2,1);(3.5,1)*+{g(x)}**\dir{-};(5,1)**\dir{-};
(2,4);(8,4)*++!L{x}**\dir{-};
(-1,2.5)*+!R{{0\cdots0}};
(5,-2.5);(7,-2.5)**\dir{-};
(7,-1)**\dir{-};
(7,-1)**\dir{.};
(7,1.5)**\dir{-};
(5,1.5)**\dir{-};
(5,0)**\dir{-};
(5,-1)**\dir{.};
(5,-2.5)**\dir{-};
(6,-0.5)*{T_h};
(5,-2);(-1,-2)*!R{{0}}**\dir{-};
(7,-2);(8,-2)*++!L{h(g(x))}**\dir{-};
(7,1);(8,1)*++++!L{\textrm{trace}}**\dir{-};
(-1,-0.5)*+!R{{0\cdots0}};
\multiline{0}{2}{-1}{}
\multiline{2}{2}{8}{}
\multiline{5}{-0}{-1}{}
\multiline{7}{-0}{8}{}
\save
(8.3,3).(8.3,-1)!C*+\frm{\}}
\restore
\endxy
\end{array}
$
\caption{Composing two Landauer embeddings.}
\label{fig:TgTh}
\end{subfigure}
\caption{Landauer embeddings.}
\label{fig:class-shape-to-closed}
\end{figure}

\section{Reversible circuits as trace of programs}
\label{sec:progr-revers-circ}

In this section, we present an implementation of Landauer embeddings
to the context of a higher-order functional programming language, and
show how it can be understood through an abstract machine.

\subsection{Simple formalization of reversible circuits}

A reversible circuit has a very simple structure. As a linear sequence
of elementary gates, it can be represented as a simple list of gates.

\begin{definition}\rm
  A \define{reversible gate} $G$ is a term 
  $
    \cnotgate{i}{b_1^{j_1}\ldots b_n^{j_n}}
  $
  where $i$, $j_1$,\ldots,$j_n$ are natural numbers such that for all
  $k$, $i\neq j_k$, and where $b_1$,\ldots,$b_n$ are booleans. If the list of
  $b_k^{j_k}$ is empty, we simply write $\notgate{i}$ in place of
  $\cnotgate{i}{}$.
  The \define{wires} of the gate
  $\cnotgate{i}{b_1^{j_1}\ldots b_n^{j_n}}$ is the set of natural
  numbers $\{i,j_1,\ldots, j_n\}$. The wire $i$ is called
  \define{active} and the $j_k$'s are called the \define{control
    wires}.
  Given a list $C$ of gates, the union of
  the sets of wires of the elements of $C$ is written $\wires{C}$.
  Finally, the boolean values True and False flowing in the wires are
  respectively represented with $\ttrue$ and $\ffalse$ throughout the paper.
\end{definition}

\begin{definition}
  A \define{reversible boolean circuit} is a triplet $(I,C,O)$ where
  $C$ is a list of reversible gates and where $I$ and $O$ are sets of
  wires. The list $C$ is the \define{raw circuit}, $I$ is the set of
  \define{inputs wires} and $O$ the set of \define{outputs wires}. We
  also call $\wires{C}\setminus I$ the \define{auxiliary wires} and
  $\wires{C}\setminus O$ the \define{garbage wires}.
\end{definition}

Executing a reversible circuit on a given tuple of booleans
computes as follows.

\begin{definition}\label{def:exec-circ}
  Consider a circuit $(I,C,O)$ and a family of bits $(x_i)_{i\in I}$.
  A \define{valuation} for the circuit is an indexed family
  $v=(v_j)_{j\in\wires{C}\cup I\cup O}$ of booleans.
  The \define{execution of a gate $\cnotgate{i}{b_1^{j_1}\ldots b_n^{j_n}}$
    on the valuation $v$} is the valuation $w$ such that for all
  $l\neq i$, $w_l=v_l$ and
  $
  w_i = v_i~{\tt xor}~\wedge_{k=1}^n(v_{j_k}~{\tt xor}~b_k~{\tt xor}~\ttrue)
  $
  if $n\geq 1$ and $w_i = {\tt not}(v_i)$ otherwise.
  The execution of the circuit $(I,C,O)$ with input $(x_i)_{i\in I}$ is
  the succession of the following operations:
  (1) Initialization of a valuation $v$ such that for all $k\in I$,
    $v_k=x_k$, and for all the other values of $k$, $v_k$ is false.
  (2) Execution of every gate in $C$ on $v$, {\em in reverse order}.
  (3) The execution of the circuit returns the sub-family
    $(v_k)_{k\in O}$.
\end{definition}

\subsection{A PCF-like language with lists of booleans}

In this section, we present the functional language \PCF{} that we use
to describe the regular computations that we eventually want to
perform with a reversible circuit. The language is simply-typed and it
features booleans, pairs and lists.
\[
\begin{array}{l@{~~}l@{~~}l}
  M,N & {:}{:}{=} &
  x \bor
  \lambda x.M \bor
  MN \bor
  \prodterm{M,N} \bor
  \pi_1(M) \bor
  \pi_2(M) \bor
  \punit \bor
  \letunitterm{M}{N} \bor
  \ttrue \bor
  \ffalse \bor
  \\
  &&
  \iftermx{M}{N}{P} \bor
  {\tt and} \bor {\tt xor} \bor {\tt not} \bor
  \inj_1(M) \bor \inj_2(M) \bor
  \\
  &&
  \match{P}{x}{M}{y}{N} \bor
  \splitlist^A \bor
  Y(M)
  \bor\errorlist,
  \\[1ex]
  A,B & {:}{:}{=} &
  \bit \bor A \oplus B \bor A\times B \bor 
  \unittype \bor A \to B \bor \listtype{A}.
\end{array}
\]
The language comes equipped with the typing rules of
Table~\ref{tab:typ-rules}. There are several things to note.  First,
the construct $\iftermx{\!\textrm{-}\!}{\!\textrm{-}\!}{}$
can only output \define{first-order
  types}. A first order type is a type from the
grammar
$
A^0,B^0~ {:}{:}{=} ~ \bit \bor A^0\times B^0 \bor \listtype{A^0}.
$
Despite the fact that one can encode them with the test construct, for
convenience we
add the basic boolean combinators ${\tt not}$, ${\tt
  xor}$ and ${\tt and}$.
There are no constructors for lists, but instead
there is a coercion from $\tunit\oplus(A\times\listtype{A})$ to
$\listtype{A}$; the term $\splitlist$ turns a list-type into a
additive type. There is a special-purpose term $\errorlist$ that
will be used in particular in Section~\ref{sec:op-sem-rev-circ} as an
error-spawning construct. The boolean values True and False are
respectively represented with $\ttrue$ and $\ffalse$. $\punit$ is the unit term and
$\letunitterm{M}{N}$ is used as the destructor of the unit. 
Finally, $Y$ is a fixpoint operator. As we shall eventually work with
a call-by-value reduction strategy, we only consider fixpoints
defining functions.

\begin{table}[t]
  {\[
  \infer{\Delta,x:A\vdash x:A}{}
  \quad
  \infer{\Delta\vdash \ttrue:\bit}{}
  \quad
  \infer{\Delta\vdash \ffalse:\bit}{}
  \quad
  \infer{\Delta\vdash \punit:\tunit}{}
  \quad
  \infer{\Delta\vdash \errorlist:A}{}
  \]
  \[
  \infer{\Delta\vdash {\tt not}:\bit\to\bit}{}
  \quad
  \infer{\Delta\vdash {\tt and}:\bit\times\bit\to\bit}{}
  \quad
  \infer{\Delta\vdash {\tt xor}:\bit\times\bit\to\bit}{}
  \]
  \[
  \infer{\Delta\vdash \splitlist:\listtype{A}\to\tunit\oplus(A\times\listtype{A})}{}
  \]
  \[
  \infer{\Delta\vdash\lambda x.M : A\to B}{
    \Delta,x:A\vdash M:B
  }
  \qquad
  \infer{\Delta\vdash\pi_i(M):A_i}{
    \Delta\vdash M:A_1\times A_2
  }
  \qquad
  \infer{\Delta\vdash\inj_i(M):A_1\oplus A_2}{
    \Delta\vdash M:A_i
  }
  \]
  \[
  \infer{\Delta\vdash MN : B}{
    \Delta\vdash M : A \to B
    &
    \Delta\vdash N : A
  }
  \qquad
  \infer{\Delta\vdash \prodterm{M,N} : A\times B}{
    \Delta\vdash M : A
    &
    \Delta\vdash N : B
  }
  \qquad
  \infer{\Delta\vdash \letunitterm{M}{N} : B}{
    \Delta\vdash M : \tunit
    &
    \Delta\vdash N : B
  }
  \]
  \[
  \infer{\Delta\vdash \match{P}{x^A}{M}{y^B}{N} : C}{
    \Delta\vdash P : A\oplus B
    &
    \Delta,x:A\vdash M : C
    &
    \Delta,y:B\vdash N : C
  }
  \qquad
  \infer{\Delta\vdash M:\listtype{A}}{
    \Delta\vdash M:\tunit\oplus(A\times\listtype{A})
  }
  \]
  \[
  \infer{\Delta\vdash Y(M):A}{
    \Delta\vdash M : A\to A
  }
  \qquad
  \infer{\Delta\vdash \iftermx{P}{M}{N} : C}{
    \Delta\vdash P : \bit
    &
    \Delta\vdash M : C
    &
    \Delta\vdash N : C
    &
    \textrm{the type $C$ is first-order}
  }
  \]}
  \caption{Typing rules of \PCF{}.}
  \label{tab:typ-rules}
\end{table}

\begin{notation}\label{notation:syntax}\rm
  We write $\nil$ for $\injl(\punit)$ and $\cons MN$ in place of $\injr (M\times N)$. We
  also write $[M_1,\ldots M_n]$ for
  $\cons{M_1}{\cons{\ldots}{\cons{M_n}{\nil}}}$.  We also write
  general products $\prodterm{M_1,\ldots,M_n}$ as
  $\prodterm{M_1,\prodterm{\ldots M_n\ldots}}$. Projections $\pi_i$
  for $i\leq n$ extends naturally to $n$-ary products. We write
  $\letrec{f}{x}{M}{N}$ for the term $(\lambda f.N)(Y(\lambda f.\lambda x.M))$.
\end{notation}

\begin{remark}
  \label{rem:if1storder}
  The typing rule of the {\tt if}-{\tt then}-{\tt else} construct
  imposes a first-order condition on the branches of the test. This
  will be clarified in Remark~\ref{rem:if1storderbis}. For now, let us
  just note that this constraint can be lifted with some syntactic
  sugar: if $M$ and $N$ are of type $A_1\to\ldots\to A_n$, where $A_n$
  is first-order, then a ``higher-order'' test $\iftermx{P}{M}{N}$ can
  be defined using the native first-order test by an $\eta$-expansion
  with  the lambda-abstraction
  $\lambda x_1\ldots
  x_n.\iftermx{P}{Mx_1\ldots x_n}{Nx_1\ldots x_n}$.
\end{remark}

\subsection{Small-step semantics}
\label{sec:betared}

\begin{table}[tb]
  \[
    \begin{array}{c}
    \begin{array}{r@{}lr@{}lr@{}l}
    (\lambda x.M)N &\to M{[}N/x{]}
    &
    \pi_i\prodterm{M_1,M_2} &\to M_i
    &
    \letunitterm{\punit}{M}
    &\to M
    \\[1ex]
    \iftermx{\ttrue}{M}{N}
    &\to M
    &
    \iftermx{\ffalse}{M}{N}
    &\to N
    &
    \splitlist\,M &\to M
    \end{array}
    \\[2.5ex]
    \begin{array}{r@{}lr@{}l}
    \match{\inj_i(P)}{x_1}{M_1}{x_2}{M_2}
    &\to
    M_i{[}P/x_i{]}
    &
    Y(M)
    &\to
    M(Y(M))
    \end{array}
    \end{array}
    \]
  \caption{Small-step semantics for \PCF{}: reduction rules, acting on
    subterms.}\label{tab:betared}
\end{table}

We equip the language \PCF{} with the smallest rewrite-system closed
under subterm reduction, satisfying the rewrite rules of
Table~\ref{tab:betared}, and satisfying the obvious rules regarding
${\tt not}$, ${\tt and}$ and ${\tt xor}$: for example, ${\tt not}~\ttrue\to
\ffalse$ and ${\tt not}~\ffalse\to \ttrue$.  Note that the term
$\errorlist$ does not reduce. This is on purpose: it represents an
error that one cannot catch with the type system; in particular it
will be used in Section~\ref{sec:op-sem-rev-circ}.
The usual safety properties are satisfied, modulo the error-spawning
term $\errorlist$.

\begin{definition}
  A value is a term $V$ defined by the grammar $\lambda x.M\bor
  \prodterm{U_1,U_2}\bor \inj_i(U) \bor c$, where $c$ is
  a constant term: $\punit$, $\ttrue$, $\ffalse$.
\end{definition}

\begin{theorem}[Safety]
  Type preservation and progress are verified:
  (1) If $\Delta\vdash M:A$, then for all $N$ such that $M\to N$ we
    also have
    $\Delta\vdash N:A$.~~
  (2) If $M$ is a closed term of type $A$ then either $M$ is a
    value, or $M$ contains the term $\errorlist$, or $M$
    reduces.\qed
\end{theorem}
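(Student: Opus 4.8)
The plan is to prove both statements by simultaneous induction on the structure of the typing derivation $\Delta \vdash M : A$, following the standard syntactic method of Wright and Felleisen, with the only novelty being the careful treatment of the $\errorlist$ escape hatch in the progress argument and the list-coercion rule in the preservation argument.

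For \textbf{type preservation (1)}, I would first establish a substitution lemma: if $\Delta, x:A \vdash M : B$ and $\Delta \vdash N : A$, then $\Delta \vdash M[N/x] : B$, proved by induction on the derivation of $M$'s type. Then, inspecting each reduction rule of Table~\ref{tab:betared}: for $(\lambda x.M)N \to M[N/x]$, invert the application and abstraction typing rules and apply the substitution lemma; for $\pi_i\prodterm{M_1,M_2} \to M_i$, invert the projection and pairing rules; similarly for the unit destructor, the two $\iftermx{}{}{}$ rules (here one uses that both branches were required to have the common first-order type $C$), the $\match{}{}{}{}{}$ rule (again using the substitution lemma on the chosen branch), and $Y(M) \to M(Y(M))$ (where $M : A \to A$ gives $M(Y(M)) : A$). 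The case $\splitlist\, M \to M$ deserves a remark: the redex has type $\tunit \oplus (A \times \listtype{A})$ and reduces to $M : \listtype{A}$, so preservation here relies precisely on the coercion typing rule that re-types a term of type $\tunit\oplus(A\times\listtype{A})$ at type $\listtype{A}$ — one must check that the non-syntax-directed nature of that rule does not break the induction, which it does not, since we may always apply it at the end. The combinator rules for ${\tt not}$, ${\tt and}$, ${\tt xor}$ are immediate since both sides have type $\bit$. Finally, closure under subterm reduction is handled by a routine induction on the one-hole context, using that each term former is typed by a rule whose premises type the immediate subterms.

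For \textbf{progress (2)}, I would argue by induction on the derivation of $\vdash M : A$ for closed $M$. Variables cannot occur (the context is empty), and $\errorlist$ falls directly into the second disjunct. For values ($\lambda x.M$, $\prodterm{U_1,U_2}$, $\inj_i(U)$, and the constants) we are in the first disjunct. For an elimination form, say $MN$ with $\vdash M : A\to B$ and $\vdash N : A$, apply the induction hypothesis to $M$: if $M$ contains $\errorlist$ so does $MN$; if $M$ reduces then $MN$ reduces by the subterm rule; and if $M$ is a value, a canonical-forms argument forces $M = \lambda x.M'$, so $MN$ fires the $\beta$-rule. The analogous case analysis handles $\pi_i$, $\letunitterm{}{}$, $\iftermx{}{}{}$ (canonical form of a closed value of type $\bit$ is $\ttrue$ or $\ffalse$), $\match{}{}{}{}{}$ (canonical form at $A\oplus B$ is $\inj_i(U)$), $\splitlist\,M$, the boolean combinators applied to arguments, and $Y(M)$ (which always reduces). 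The coercion rule $\Delta \vdash M : \listtype{A}$ from $\Delta \vdash M : \tunit\oplus(A\times\listtype{A})$ is transparent for progress: $M$ is the same term, so we just invoke the induction hypothesis.

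The \textbf{main obstacle} is not any single case but rather bookkeeping around the two non-syntax-directed features: the list-coercion rule and the first-order side condition on $\iftermx{}{}{}$. For the coercion rule one should prove, or at least note, a canonical-forms statement of the shape ``a closed value of type $\listtype{A}$ is $\nil$ or $\cons{U}{V}$'' by peeling off coercion rules until a syntax-directed rule is reached; this is what makes the $\splitlist$ and $\matchterm$ progress cases go through cleanly. For the test, preservation needs the observation that the first-order restriction is preserved under the reductions, which is immediate since $C$ itself does not change. Everything else is the routine Wright–Felleisen bookkeeping, so I would state the substitution and canonical-forms lemmas explicitly and then dispatch the rule-by-rule analysis briskly.
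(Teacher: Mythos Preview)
The paper does not actually prove this theorem; it is stated with a terminal \qed and left as routine. Your Wright--Felleisen outline is the standard argument one would supply, and the overall structure (substitution lemma, case analysis on redexes, canonical-forms lemma for progress, explicit handling of $\errorlist$) is correct.

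There is, however, a genuine slip in your treatment of the $\splitlist$ case for preservation. You write that preservation ``relies precisely on the coercion typing rule that re-types a term of type $\tunit\oplus(A\times\listtype{A})$ at type $\listtype{A}$'', but the direction you need is the opposite one. From the typing of $\splitlist$ as $\listtype{A}\to\tunit\oplus(A\times\listtype{A})$, the redex $\splitlist\,M$ has type $\tunit\oplus(A\times\listtype{A})$ and the premise gives $M:\listtype{A}$; after the step $\splitlist\,M\to M$ you must show $M:\tunit\oplus(A\times\listtype{A})$, i.e.\ pass from $\listtype{A}$ to its unfolding. The coercion rule in Table~\ref{tab:typ-rules} only goes from the unfolding to $\listtype{A}$, so it does not help here. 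Concretely, with $x:\listtype{A}$ in the context one has $\splitlist\,x\to x$, yet $x:\tunit\oplus(A\times\listtype{A})$ is not derivable from the rules as written. The usual fix is to read $\listtype{A}$ equi-recursively, so that $\listtype{A}$ and $\tunit\oplus(A\times\listtype{A})$ are interchangeable and the coercion is effectively two-way; this is almost certainly the paper's intent, but your write-up should state that assumption explicitly rather than invoke the one-directional rule, which does not do the job.
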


In summary, the language is well-behaved. It is also reasonably
expressive, in the sense that most of the computations that one could
want to perform on lists of bits can be described, as shown in
Example~\ref{ex:list}.

\begin{convention}\rm
  When defining a large piece of code, we will be using a Haskell-like
  notation. So instead of defining a closed function as a lambda-term
  on a typing judgment, we shall be using the notation
\begin{Verbatim}[fontsize=\relsize{-2}]
function : type_of_the_function
function arg1 arg2 ... = body_of_the_function
\end{Verbatim}
Also, we shall use the convenient notation $\lettermx{x}{M}{N}$ for
$(\lambda x.N)M$ and the notation $\letprodterm{x,y}{\!M}{N}$ for
$\lettermx{z\!\!}{\!\!M}{\lettermx{x\!}{\!\pi_1(z)}{\lettermx{y\!}{\!\pi_1(z)}{N}}}$.
Similarly, we allow multiple variables for recursive functions, and we
use pattern-matching for lists and general products in the same
manner.
\end{convention}

\begin{example}[List combinators]\label{ex:list}
  The usual list combinators can be defined. 
Here we give the definition of {\tt foldl: (A $\to$ B $\to$ A) $\to$ A
  $\to$ [B] $\to$ A}. The other ones (such as map, zip\ldots) are written
similarly.
\begin{Verbatim}[fontsize=\relsize{-2},commandchars=\\\{\},
  codes={\catcode`$=3\catcode`^=7}]
foldl f a l = letrec g z l' = match (split l') with 
                     nil$\hspace{2.1ex}$ $\mapsto$ z
                   | $\langle$h,t$\rangle$ $\mapsto$ g (f z h) t
              in g a l
\end{Verbatim}
\end{example}

\begin{example}[Ripple-carry adder]\label{ex:adder}
  One can easily encode a bit-adder: it takes a carry and two
  bits to add, and it replies with the answer and the carry to forward.
\begin{Verbatim}[fontsize=\relsize{-2},commandchars=\\\{\},
  codes={\catcode`$=3\catcode`^=7}]
bit_adder : bit $\to$ bit $\to$ bit $\to$ (bit $\times$ bit)
bit_adder carry x y =
      let majority a b c = if (xor a b) then c else a in
      let z = xor (xor carry x) y in
      let carry' = majority carry x y in $\langle$carry', z$\rangle$
\end{Verbatim}
  Encoding integers as lists of bits, low-bit first, one can use the
  bit-adder to write a complete adder in a ripple-carry manner,
  amenable to a simple folding.
  We use an implementation
  similar to the one done in~\cite{Quipper}.
\begin{Verbatim}[fontsize=\relsize{-2},commandchars=\\\{\},
  codes={\catcode`$=3\catcode`^=7\catcode`\#=8}]
adder_aux : (bit $\times$ [bit]) $\to$ (bit $\times$ bit) $\to$ (bit $\times$ [bit])
adder_aux $\langle$w, cs$\rangle$ $\langle$a, b$\rangle$ = let $\langle$w', c'$\rangle$ = bit_adder w a b in $\langle$w', c'::cs$\rangle$

adder : [bit] $\to$ [bit] $\to$ [bit]
adder x y = $\pi#2$ (foldl adder_aux $\langle\ffalse$, nil$\rangle$ (zip y x))
\end{Verbatim}
\end{example}

\subsection{Reversible circuits from operational semantics}
\label{sec:op-sem-rev-circ}

We consider the language \PCF{} as a \define{specification language} for
boolean reversible circuits in the following sense: A term of type
$x_1:\bit,\ldots,x_n:\bit\vdash M:\bit^m$ computes a boolean function
$f_M : \bit^n\to \bit^m$.

In this section, we propose an operational semantics for the language
\PCF{} generating Landauer embeddings, as
described in Section~\ref{sec:reversible-circuits}. The circuit is
produced during the execution of an abstract machine and partial
evaluation of terms. Essentially, a term reduces as usual, except for
the term constructs handling the type $\bit$, for which we only record
the operations to be performed. Formally, the definitions are as
follows.

\begin{definition}\rm
  A \define{circuit-generating abstract machine} is a tuple consisting
  of (1) a typing judgment $p_1:\bit,\ldots,p_{n+k}:\bit\vdash
    M:\bit^m$ ;~~
  (2) a partial circuit $RC :=
    (\{1,\ldots,n\},C) $ where $C$ is a
    list of gates;~~
  (3) a one-to-one linking function mapping the free variables $p_i$
    of $M$ to the wires $\wires{C}\cup\{1,\ldots,n\}$.

  Intuitively, $\{1,\ldots,n\}$ is the set of input wires. The set of
  output wires is not yet computed: we only get it when $M$ is a
  value.  If $G$ is a gate, we write $G::(I,C)$ for the partial
  circuit $(I,G::C)$.
  Given a judgment
  $p_1:\bit,\ldots,p_n:\bit\vdash M:\bit^m$, the empty machine is $
  (M,(\{1,\ldots,n\},\{\}),\{p_i \mapsto i~~\bor~~ i = 1\ldots
  n\}) $ and is denoted with $\emptyam(M)$. The size of the domain of
  a linking function $L$ is written $\sharp(L)$.

  By abuse of notation, we shall write abstract machine with terms,
  and not typing judgements. It is assumed that all terms are
  well-typed according to the definition.
\end{definition}

\begin{definition}\rm
  Given a linking function $L$, a \define{first-order extension of
    $L$} consists of a term of shape
  $
  M~{:}{:}{=}~p_i\bor\prodterm{M_{1},\ldots M_{n}}\bor
  [M_{1},\ldots M_{n}],
  $
  where the $p_i$'s are in the domain of $L$. We say that two
  first-order extensions of $L$ have \define{the same shape} provided
  that they are both products with the same size or lists with the
  same size such as their components have pairwise the same shape.
\end{definition}

\begin{table}[tb]
  \[\begin{array}{r@{~}l}
        (C[(\lambda x.M)N], RC, L) &\rwam (C[M{[}N/x{]}], RC, L)
        \\[.5ex]
        (C[\pi_i\prodterm{M_1,M_2}], RC, L) &\rwam (C[M_i], RC, L)
        \\[.5ex]
        (C[\letunitterm{\punit}{M}], RC, L)  &\rwam (C[M], RC, L)
        \\[.5ex]
        (C[\splitlist\,M], RC, L) &\rwam (C[M], RC, L)
        \\[.5ex]
        (C[\match{\inj_i(P)}{x_1}{M_1}{x_2}{M_2}], RC, L)
        &\rwam
        (C[M_i{[}P/x_i{]}], RC, L)
        \\[.5ex]
        (C[Y(M)], RC, L)
        &\rwam
        (C[M(Y(M))], RC, L)
    \end{array}
    \]
  \caption{Rewrite rules for circuit-generating abstract-machine:
    generic rules.}
  \label{tab:ambetagen}
\end{table}

\begin{table}[t]
  {\begin{align*}
  (C[\ffalse],RC,L)\rwam(C[p_{i_0}], RC, L')
  ~~&~~
  (C[\ttrue],RC,L)\rwam(C[p_{i_0}], (\notgate{i_0})::RC,
  L')
  \\
  (C[{\tt not}~p_i],RC,L)&\rwam(C[p_{i_0}],\cnotgate{i_0}{\ffalse^i}::RC,
  L')
  \\
  (C[{\tt and}~p_i~p_j],RC,L)&\rwam(C[p_{i_0}],
  \cnotgate{i_0}{\ttrue^i\ttrue^j} :: RC, L')
  \\
  (C[{\tt xor}~p_i~p_j],RC,L)&\rwam(C[p_{i_0}],\cnotgate{i_0}{(i,\ttrue)}::
  \cnotgate{i_0}{\ttrue^j}::RC, L')
  \\
  (C[\iftermx{p_i}{V}{W}], RC, L)&\rwam
                                   \\
     &\hspace{-10ex}
  \left\{
    \begin{array}{l@{\quad}l}
      (C[U], RC',L'') & \textrm{$V$ and $W$ of the same shape}
      \\
      C[\errorlist] & \textrm{otherwise}
    \end{array}\right.
  \end{align*}}
\caption{Rewrite rules for circuit-generating abstract-machines: rules
for booleans}
\label{tab:rw-abm}
\end{table}

The set of circuit-generating abstract machines is equipped with a
rewrite-system $(\rwam)$ defined using a notion of
\define{beta-context} $C[-]$, that is, a term with a hole, as follows.
\[
\begin{array}{l}
[-]
\bor
\lambda x.C[-] \bor (C[-])N \bor M(C[-]) \bor
\prodterm{C[-],N} \bor
\prodterm{M,C[-]} \bor
\\
\pi_1(C[-]) \bor
\pi_2(C[-]) \bor
\letunitterm{C[-]}{N} \bor
\letunitterm{M}{C[-]} \bor
\iftermx{C[-]}{N}{P} \bor
\\
\iftermx{\!M\!}{C[\!-\!]}{\!P\!} \bor
\iftermx{\!M\!}{\!N\!}{C[\!-\!]} \bor
\inj_1(C[\!-\!]) \bor 
\inj_2(C[\!-\!]) \bor
\\
\match{C[-]}{x}{M}{y}{N} \bor
\match{P}{x}{C[-]}{y}{N} \bor
\\
\match{P}{x}{M}{y}{C[-]} \bor
  Y(C[-]).
\end{array}
\]
The constructor $[-]$ is the \define{hole} of the context. Given a
context $C[-]$ and a term $M$, we define $C[M]$ as the
variable-capturing substitution of the hole $[-]$ by $M$.

The rewrite rules can then be split in two sets. The first set
concerns all the term constructs unrelated to the type $\bit$. In
these cases, the state of the abstract machine is not modified, only
the term is rewritten. The rules, presented in
Table~\ref{tab:ambetagen}, are the same as for the small-step
semantics of Table~\ref{tab:betared}: apart from the two rules
concerning ${\tt if}$-${\tt then}$-${\tt else}$, all the others are
the same.

The second set of rules concerns the terms dealing with the type
$\bit$, and can be seen as partial-evaluation rules: we only record
in the circuit
the operations that would need to be done. The rules are shown
in Table~\ref{tab:rw-abm}.  The linking function $L'$ is
$L\cup\{p_{i_0}\mapsto i_0\}$, where $i_0$ is a new wire. The
variable $p_{i_0}$ is assumed to be fresh. 
For the case of the ${\tt if}$-${\tt then}$-${\tt else}$, we assume
$V$ and $W$ are first-order extensions of $L$ with the same shape. The
term $U$ is a first-order extension of $L$ with the same shape as $V$
and $W$ containing only (pairwise-distinct) free variables and mapping
to new distinct garbage wires. $L''$ is $L$ updated with this new
data.  Suppose that $V$ contains the variables $v_1,\ldots v_k$, that
$W$ contains the variables $w_1,\ldots w_k$ and that $U$ contains the
variables $u_1,\ldots u_k$.  Then $RC'$ is $RC$ with the following
additional series of gates:
$\cnotgate{u_j}{\ttrue^{p_i}\ttrue^{v_i})}$ and
$\cnotgate{u_j}{\ffalse^{p_i}\ttrue^{w_i}}$.

\begin{remark}
  Note that the set $I$ is never modified by the rules
\end{remark}

Safety properties hold for this new semantics, in the sense that the
only error uncaught by the type system is the term $\errorlist$ that
might be spawned.

\begin{theorem}[Type preservation]
  If $p_1:\bit,\ldots,p_{\sharp(L)}:\bit\vdash M:\bit^m$, if
  $(M,RC,L)$ is an abstract machine and if $(M,RC,L)\rwam
  (N,RC',L')$, then we have the judgement 
  $p_1:\bit,\ldots,p_{\sharp(L')}:\bit\vdash
  N:\bit^m$.\qed
\end{theorem}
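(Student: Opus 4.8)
The plan is to proceed by case analysis on the rewrite rule used to derive $(M,RC,L)\rwam(N,RC',L')$, relying on the fact (stated in the hypothesis) that $M$ is well-typed, together with the type-preservation property of the small-step semantics (the Safety theorem), which already handles the generic rules of Table~\ref{tab:ambetagen}. The key observation is that for the generic rules, $RC$ and $L$ are unchanged, so $\sharp(L')=\sharp(L)$ and the typing context $p_1:\bit,\ldots,p_{\sharp(L)}:\bit$ is the same on both sides; since these rules coincide with those of Table~\ref{tab:betared} and the hole context $C[-]$ only embeds a subterm, subject reduction for the small-step semantics gives the result directly. So the real content is in the boolean rules of Table~\ref{tab:rw-abm}.

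For the first four boolean rules ($\ffalse$, $\ttrue$, ${\tt not}~p_i$, ${\tt and}~p_i~p_j$, ${\tt xor}~p_i~p_j$), the pattern is uniform: a subterm of type $\bit$ inside $C[-]$ is replaced by the fresh variable $p_{i_0}$, and $L$ is extended to $L'=L\cup\{p_{i_0}\mapsto i_0\}$, so $\sharp(L')=\sharp(L)+1$. I would first check that the replaced subterm indeed has type $\bit$ in each case — $\ffalse:\bit$, $\ttrue:\bit$, and ${\tt not}~p_i$, ${\tt and}~p_i~p_j$, ${\tt xor}~p_i~p_j$ all have type $\bit$ by the typing rules for ${\tt not},{\tt and},{\tt xor}$ together with the variable rule for $p_i,p_j$. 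Then I would argue that replacing such a subterm by the fresh variable $p_{i_0}:\bit$ in the extended context $p_1:\bit,\ldots,p_{\sharp(L)}:\bit,p_{i_0}:\bit$ preserves the type of the whole term: this is a routine induction on the structure of the beta-context $C[-]$, using that weakening is admissible (adding $p_{i_0}:\bit$ to $\Delta$ does not disturb any other judgement) and that each context former has a typing rule in which the type of the subterm in the hole is preserved. Renaming $p_{i_0}$ to $p_{\sharp(L)+1}$ identifies the context with $p_1:\bit,\ldots,p_{\sharp(L')}:\bit$. Note the circuit $C$ grows but its wire set is irrelevant to the typing judgement, so the changes to $RC$ are harmless.

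The delicate case is the ${\tt if}$-${\tt then}$-${\tt else}$ rule. In the successful branch, $\iftermx{p_i}{V}{W}$ of type $C$ (necessarily first-order) is replaced by $U$, a first-order extension of $L$ with the same shape as $V$ and $W$, built from $k$ fresh variables mapped to new garbage wires; here $\sharp(L'')=\sharp(L)+k$. I must check that $U$ has the same type $C$ as the subterm it replaces. Since $V$ and $W$ have the same shape and, being the branches of a well-typed test, both have type $C$, and $U$ has that same shape with all leaves being fresh $\bit$-typed variables, a structural induction on the shape (product/list nesting bottoming out at $\bit$) shows $p_1:\bit,\ldots,p_{\sharp(L'')}:\bit\vdash U:C$ — this uses the product and list (coercion) typing rules and the fact that a first-order type is determined by its shape together with $\bit$ at the leaves. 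Then, as before, substituting $U$ into $C[-]$ preserves the type of the whole term by induction on the context, using weakening to accommodate the new variables $u_1,\ldots,u_k$. The other branch, where the reduct is $C[\errorlist]$, is immediate since $\errorlist$ has every type $A$, in particular $C$, and no new variables are introduced ($L$ unchanged).

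The main obstacle is essentially bookkeeping rather than conceptual: one must be careful that the linking-function size $\sharp(L')$ is tracked correctly in each rule (it increases by $1$ in the single-wire boolean rules and by $k$ in the successful ${\tt if}$ rule), that the freshness assumptions genuinely make the new variables absent from $M$ so weakening applies, and — in the ${\tt if}$ case — that "same shape" really forces the replacing term $U$ to inhabit the same first-order type $C$ as the branches. Once the substitution-into-context lemma and weakening are set up, each case closes mechanically; I do not expect any genuinely hard step.
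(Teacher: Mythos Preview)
The paper does not actually give a proof of this theorem: the \qed{} symbol is placed at the end of the statement itself, signalling that the result is regarded as routine and left to the reader. Your case analysis on the rewrite rule --- splitting off the generic rules (where $L$ is unchanged and subject reduction for the small-step semantics applies) from the boolean rules (where a $\bit$-typed subterm is replaced by one or several fresh $\bit$-typed variables and the context is weakened accordingly) --- is correct and is exactly the standard argument one would expect; you have simply filled in what the paper omitted.
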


\begin{theorem}[Progress]
  Suppose that $p_1:\bit,\ldots,p_{\sharp(L)}:\bit\vdash M:\bit^m$ is valid and that $(M,RC,L)$ is an abstract
  machine. Then either $M$ is a value, or $M$ contains $\errorlist$, or
  $(M,RC,L)$ reduces through ($\rwam$).\qed
\end{theorem}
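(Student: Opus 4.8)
The plan is to prove progress by induction on the derivation of $p_1:\bit,\ldots,p_{\sharp(L)}:\bit\vdash M:\bit^m$ — equivalently, by structural induction on $M$ — exactly in the style of the usual progress proof for \PCF. The decisive structural fact is that \emph{every free variable of $M$ has the ground type $\bit$}: consequently a ``neutral'' term whose head is a free variable can only be a lone $p_i:\bit$, and we declare every such $p_i$ (and, when reading off output wires, every first-order extension of $L$ built from them) to be a value. Note that $\rwam$ reduces inside an arbitrary beta-context, so it suffices to exhibit \emph{some} redex. With this convention the generic constructs of Table~\ref{tab:ambetagen} are handled verbatim as in the small-step safety theorem, since the corresponding rewrite rules are identical; the genuinely new work concerns the constructs touching $\bit$ together with the constants $\ttrue$ and $\ffalse$, which — unlike in the small-step system — are no longer normal forms.

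For the elimination forms $MN$, $\pi_i(M)$, $\letunitterm{M}{N}$, $\match{P}{x}{\cdots}{y}{\cdots}$, $\iftermx{P}{\cdots}{\cdots}$, $\splitlist\,M$ and $Y(M)$, I apply the induction hypothesis to the principal subterm. If it contains $\errorlist$, so does $M$; if it reduces, then $(M,RC,L)$ reduces by firing that step inside the corresponding beta-context; otherwise the principal subterm is a value, and a canonical-forms analysis — forced by its type, and using that it sits in a $\bit$-only context — exhibits a head redex. A value of arrow type is a $\lambda$-abstraction or one of the built-ins ${\tt not}$, ${\tt and}$, ${\tt xor}$, $\splitlist$, triggering a $\beta$-rule of Table~\ref{tab:ambetagen} or the matching rule of Table~\ref{tab:rw-abm}; a value of product, sum, unit or list type fires $\pi_i$, ${\tt match}$, $\letunitterm{\punit}{-}$ or $\splitlist$; and $Y(M)$ is always a redex. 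For the introduction forms, $\lambda x.M$ and $\punit$ are values, $\prodterm{M,N}$ and $\inj_i(M)$ are values exactly when their immediate subterms are and otherwise inherit a redex (or an $\errorlist$) from a subterm by induction, and a bare $\errorlist$ is the second disjunct. Finally, the two machine-specific cases: $(C[\ttrue],RC,L)$ and $(C[\ffalse],RC,L)$ always step — to $(C[p_{i_0}],\ldots)$, with $\notgate{i_0}$ recorded in the $\ttrue$ case — so the constants always make progress; and for $\iftermx{P}{V}{W}$, if $P$ is not yet a wire variable we recurse into it, while if $P=p_i$ the ${\tt if}$-rule of Table~\ref{tab:rw-abm} applies provided $V,W$ are first-order extensions of $L$, the ``same shape'' / ``otherwise'' split being exhaustive (producing a circuit extension $RC'$ in the first case, $\errorlist$ in the second), and if $V$ or $W$ is not yet a first-order extension it is not a value, so the induction hypothesis again supplies a redex deeper inside.

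The delicate point, and the only place beyond routine bookkeeping, is the canonical-forms lemma at first-order types: one must check that in a context all of whose variables have type $\bit$, a value of first-order type is either a first-order extension of $L$ or still contains a $\ttrue/\ffalse$ leaf (hence is reducible), so that the ${\tt if}$-rule — whose applicability is gated on the two branches being first-order extensions of the same shape — never leaves the machine stuck: when the guard is a wire variable, either both branches are already in the required form and the rule fires, or some branch is reducible by induction. Dually one must confirm that no subterm of higher type can be neutral, precisely because there are no free variables of higher type; granting these two observations, the disjunction of the theorem is closed in every case.
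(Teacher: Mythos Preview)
The paper omits this proof entirely (the statement carries a bare \qed{} with no argument), so there is nothing to compare against; your structural induction with a canonical-forms analysis is the expected argument and is correct. In particular you rightly isolate the one non-routine case, the {\tt if}-rule, together with the lemma that in a $\bit$-only context every value of first-order type is either a first-order extension of $L$ or still contains a reducible $\ttrue/\ffalse$ leaf, which is exactly what is needed for the shape precondition of Table~\ref{tab:rw-abm} never to leave the machine stuck.

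Two minor remarks. First, since $\rwam$ is full $\beta$ rather than call-by-value, $(\lambda x.M)N$, $\splitlist\,M$ and $Y(M)$ are redexes for \emph{arbitrary} $M,N$, so the induction hypothesis on the argument is only genuinely needed for ${\tt not}$, ${\tt and}$, ${\tt xor}$ and the two {\tt if}-branches; several of your cases are therefore simpler than you indicate. Second, you are right that the paper's value grammar must be tacitly extended with the wire variables $p_i$ (and with the combinator constants ${\tt not}$, ${\tt and}$, ${\tt xor}$, $\splitlist$) for the theorem even to be well-posed on open terms---without that extension a bare $p_i$ is neither a value nor reducible nor an $\errorlist$.
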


\subsection{Simulations}

The abstract machine $M$ generates a circuit computing the same
function as the small-step reduction of $M$ in the following sense.

\begin{definition}
  Let $(M,(I,C),L)$ be an abstract machine. 
We write $C(M,(I,C),L)$ for the circuit defined as $(I,
  C, {\rm Range}(L))$.
  Let $(v_k)_{k\in {\rm Range}(L)}$ be the execution of the circuit
  $C(M,(I,C),L)$ on the valuation $\vec{u} = (u_i)_{i\in I}$. We
  define $T(M,(I,C),L)(\vec{u})$ as the term $M$ where each free
  variable $x$ has been replaced with $v_{L(x)}$.
\end{definition}

Intuitively, if $(M,RC,L)$ is seen as a term where some boolean
operations have been delayed in $RC$, then $T(M,RC,L)$ corresponds to the
term resulting from the evaluation of the delayed operations.

\begin{theorem}\label{th:sim}
  \label{th:eq-am-beta}
  Consider a judgment $x_1:\bit,\ldots,x_n:\bit\vdash M:\bit^m$ and
  suppose that
  $
    \emptyam(M)\rwam^*(\prodterm{p_{i_1},\ldots p_{i_k}}, (I,C),
  L).
  $
  Then $k=m$, and provided that $\vec{u} = (b_i)_{i\in I}$, the term
  $T(\prodterm{p_{i_1},\ldots p_{i_k}}, (I,C), L)(\vec{u})$ is
  equal to $\prodterm{c_1,\ldots c_m}$ if and only if the
  term $ \letprodterm{x_1,\ldots x_n}{\prodterm{b_1,\ldots b_n}}{M} $
  reduces to $\prodterm{c_1,\ldots c_m}$.
\end{theorem}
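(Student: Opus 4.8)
The plan is to prove a stronger, invariant-style statement by induction on the length of the reduction $\emptyam(M)\rwam^* (\cdots)$, and then specialize. The natural invariant generalizing the theorem is: whenever $\emptyam(M_0) \rwam^* (M, (I,C), L)$, then for every input $\vec u = (b_i)_{i\in I}$, and with $(v_k)_{k\in{\rm Range}(L)}$ the execution of $C(M,(I,C),L)$ on $\vec u$, the term $T(M,(I,C),L)(\vec u)$ is $\beta$-convertible (in the small-step system of Table~\ref{tab:betared}) to $M_0[b_1/x_1,\ldots,b_n/x_n]$ — or, put operationally, the two terms reduce to the same value (when one exists). In other words, $T(M,(I,C),L)(\vec u)$ and $\letprodterm{x_1,\ldots,x_n}{\prodterm{b_1,\ldots,b_n}}{M_0}$ have the same normal form. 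The theorem's statement is then the instance where $M = \prodterm{p_{i_1},\ldots,p_{i_k}}$ is a value: since each $p_{i_j}$ is a variable, $T$ substitutes the bit $v_{i_j}$, so the term is literally $\prodterm{v_{i_1},\ldots,v_{i_k}}$, which is a value and is the answer iff it equals $\prodterm{c_1,\ldots,c_m}$; and $k=m$ follows from the Type preservation theorem, which gives $\vdash \prodterm{p_{i_1},\ldots,p_{i_k}}:\bit^m$.

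\smallskip
\noindent
\textbf{Key steps, in order.}
\emph{(1)} Set up the invariant above and note the base case: for $\emptyam(M_0)$ the circuit is empty, ${\rm Range}(L)=I=\{1,\ldots,n\}$, the execution of the empty circuit on $\vec u$ returns $\vec u$ itself, so $T(\emptyam(M_0))(\vec u) = M_0[b_i/x_i]$ and the invariant holds trivially.
\emph{(2)} For the inductive step, assume the invariant for $(M,(I,C),L)$ and consider one more step $(M,(I,C),L)\rwam(N,(I,C'),L')$; I must show it transfers to $(N,(I,C'),L')$. Split on which rewrite rule of Table~\ref{tab:ambetagen} or Table~\ref{tab:rw-abm} was applied.
\emph{(3)} For the \emph{generic} rules (Table~\ref{tab:ambetagen}): here $C'=C$ and $L'=L$, so the circuit and its execution are unchanged; the rule is exactly a $\beta$-rule applied inside a context, and since $T(-)(\vec u)$ just performs a substitution of closed bit-values for the free variables (which commutes with the rule firing, as the redex does not involve those variables essentially — or rather, the substitution is applied uniformly), $T(N,(I,C),L)(\vec u)$ is obtained from $T(M,(I,C),L)(\vec u)$ by the same contextual $\beta$-step, hence has the same normal form. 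Here I would carefully invoke the fact that $\to$ is closed under subterm reduction and that substitution of values commutes with these reductions (a standard substitution lemma).
\emph{(4)} For the \emph{boolean} rules (Table~\ref{tab:rw-abm}): now $L' = L\cup\{p_{i_0}\mapsto i_0\}$ adds a fresh wire $i_0$, and $C'$ prepends one or two gates. The crux is a local computation: show that, after executing the new circuit $C(N,(I,C'),L')$ on $\vec u$, the value $v_{i_0}$ landing on the fresh wire equals precisely the boolean that the corresponding small-step rule of Table~\ref{tab:betared} would produce from the (already-substituted) arguments. Concretely: for $\ttrue$, the prepended $\notgate{i_0}$ flips the fresh wire (initialized to $\ffalse$) to $\ttrue$; for ${\tt not}~p_i$, the gate $\cnotgate{i_0}{\ffalse^i}$ computes $v_{i_0} \mathrel{=} \ffalse \,{\tt xor}\, (v_i\,{\tt xor}\,\ffalse\,{\tt xor}\,\ttrue) \mathrel{=} {\tt not}(v_i)$ by Definition~\ref{def:exec-circ}; for ${\tt and}$ and ${\tt xor}$ similarly, using that the older wires $i,j$ are computed identically in the extended execution (executing gates in reverse order, the new gates are at the front of $C'$ hence executed \emph{last}, so they do not perturb the values on the old wires — this monotonicity-of-execution point must be stated as its own lemma). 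Then $T(N,(I,C'),L')(\vec u)$ is $T(M,(I,C),L)(\vec u)$ with the appropriate subterm ($\ttrue$, ${\tt not}~v_i$, \dots) replaced by the bit it evaluates to, which is again a single $\to$-step, preserving normal forms.
\emph{(5)} For the ${\tt if}$-${\tt then}$-${\tt else}$ case: if $V$ and $W$ do not have the same shape the machine goes to $C[\errorlist]$ and both sides fail to produce a value-product, so the biconditional holds vacuously; if they do have the same shape, unfold the definition of $RC'$ (the $2k$ gates copying the bits of $V$ resp.\ $W$ onto the fresh garbage wires $u_1,\ldots,u_k$, controlled positively on $p_i$ resp.\ negatively on $p_i$). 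Using Definition~\ref{def:exec-circ} one checks that $u_j$ receives the $j$-th bit of $V$ if $v_{L(p_i)} = \ttrue$ and the $j$-th bit of $W$ otherwise — matching $\iftermx{\ttrue}{V}{W}\to V$ and $\iftermx{\ffalse}{V}{W}\to W$ from Table~\ref{tab:betared}. So again $T(N,\ldots)(\vec u)$ is one $\to$-step from $T(M,\ldots)(\vec u)$.
\emph{(6)} Finally assemble: by induction the invariant holds at the terminal configuration $(\prodterm{p_{i_1},\ldots,p_{i_k}},(I,C),L)$; specialize as described in the first paragraph, using that a value-term has itself as unique normal form, to get the stated biconditional.

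\smallskip
\noindent
\textbf{Main obstacle.}
The delicate points are all in step~\emph{(4)}–\emph{(5)}: I need the lemma that \emph{executing the extended circuit does not change the values on the previously-existing wires} — true here because new gates are prepended to $C$ and execution is done in reverse order, so the fresh gates fire last and each touches only its (new) active wire — and I need to get the ${\tt xor}$-of-controls formula in Definition~\ref{def:exec-circ} to match the intended boolean operation for each of the five gate-patterns, including the slightly fiddly two-gate encoding of ${\tt xor}~p_i~p_j$ (where one control is on $i$ and one produces an unconditional flip scaled by $p_j$). The other somewhat subtle bookkeeping is the claim that $T(-)(\vec u)$ — substitution of the wire-values for free variables — \emph{commutes} with the machine rewrites in the sense made precise in steps \emph{(3)}–\emph{(5)}; this is where a clean substitution lemma for the small-step system, together with the observation that fresh variables $p_{i_0}$ introduced by the boolean rules are mapped by $L'$ to exactly the wire whose computed value realizes the delayed operation, does all the work. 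Everything else is routine case analysis.
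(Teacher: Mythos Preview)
Your proposal is correct and follows essentially the same approach as the paper: the paper isolates your inductive step as a separate single-step invariant (Lemma~\ref{lem:inv1}), stating that one abstract-machine rewrite either leaves $T(-)(\vec u)$ unchanged, advances it by a single small-step reduction, or introduces $\errorlist$, and then chains these along the assumed terminating run. Your steps~(3)--(5) are precisely the (omitted) case analysis proving that lemma, including the key observation that prepended gates are executed last and hence leave old wire values intact.
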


The proof is done using an invariant on a single step of the rewriting
of abstract machines, stated as follows.

\begin{lemma}\label{lem:inv1}
  Consider a judgment $x_1:\bit,\ldots,x_n:\bit\vdash M:\bit^m$ and
  suppose that
  $
    (M,(I,C),L)\rwam(N,(I,C'),L').
  $
  Let $\vec{u}=(u_i)_{i\in I}$ be a valuation.
  Then either the term
  $T(M,(I,C),L)(\vec{u})$ is equal to $T(N,(I,C'),L')(\vec{u})$ if the rewrite corresponds to the
  elimination of a boolean $\ttrue$ or $\ffalse$, or $
  T(M,(I,C),L)(\vec{u}) \to T(N,(I,C'),L')(\vec{u})$, or $N$ contains the
  error term 
  $\errorlist$.
  \qed
\end{lemma}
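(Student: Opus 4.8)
The plan is to proceed by case analysis on the rewrite rule $(\rwam)$ applied in the step $(M,(I,C),L)\rwam(N,(I,C'),L')$, using the decomposition $M = C_0[R]$ where $C_0[-]$ is a beta-context and $R$ is the redex. Since the operation $T(-)$ substitutes each free variable by the value it receives after circuit execution, and since beta-contexts commute with capture-avoiding substitution of closed boolean values, it suffices in each case to analyse what happens locally at the redex $R$ and then re-insert the result into $T(C_0[-])(\vec u)$. First I would dispatch the six generic rules of Table~\ref{tab:ambetagen}: here $C' = C$ and $L' = L$, so the circuit and hence the execution valuation $(v_k)_{k\in{\rm Range}(L)}$ is literally unchanged; applying $T(-)(\vec u)$ to both sides turns the abstract-machine step into exactly the corresponding small-step rule of Table~\ref{tab:betared} (beta, projection, unit, split, match, $Y$), giving $T(M,\ldots)(\vec u)\to T(N,\ldots)(\vec u)$. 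The only subtlety is the substitution lemma: $T(C_0[R'])(\vec u) = (T(C_0[-])(\vec u))[\,T(R')(\vec u)\,]$ up to the renaming of variables, which holds because the free variables of $M$ are exactly the $p_i$ in the domain of $L$ and these are replaced uniformly by booleans.

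Next I would treat the boolean rules of Table~\ref{tab:rw-abm}. For $(C_0[\ffalse],RC,L)\rwam(C_0[p_{i_0}],RC,L')$: the circuit $C$ is unchanged, but $L' = L\cup\{p_{i_0}\mapsto i_0\}$ with $i_0$ a fresh wire carrying no gate, so $v_{i_0} = \ffalse$; hence $T$ of the right-hand side replaces $p_{i_0}$ by $\ffalse$, recovering $T$ of the left-hand side — this is the ``equality'' case of the statement, as is the $\ttrue$ rule, where the single gate $\notgate{i_0}$ forces $v_{i_0} = \ttrue$. For ${\tt not}~p_i$, ${\tt and}~p_i~p_j$, and ${\tt xor}~p_i~p_j$ I would compute, using Definition~\ref{def:exec-circ}, the value $v_{i_0}$ produced on the new wire by the recorded gate(s) acting on the already-fixed values $v_{L(p_i)}$, $v_{L(p_j)}$, and check it equals ${\tt not}(v_{L(p_i)})$, $v_{L(p_i)}\wedge v_{L(p_j)}$, resp.\ $v_{L(p_i)}\,{\tt xor}\,v_{L(p_j)}$; then $T$ of the left-hand side reduces in one (or two, for ${\tt xor}$) small step(s) to $v_{i_0}$, which is $T$ of the right-hand side — the ``reduction'' case. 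Here the key point is that the gates are appended in front of $RC$ while execution runs the list in reverse order, so the new gates fire last, after all old wires have stabilised, and they only read wires whose values are already determined and write the single fresh wire $i_0$.

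The main obstacle is the ${\tt if}$-${\tt then}$-${\tt else}$ rule. When $V$ and $W$ are not first-order extensions of $L$ of the same shape, the machine produces $C_0[\errorlist]$, so $N$ contains $\errorlist$ and the third disjunct holds trivially. When they do have the same shape, I would argue as follows: the fresh garbage wires $u_1,\ldots,u_k$ of the template $U$ all start at $\ffalse$; the appended gates $\cnotgate{u_j}{\ttrue^{p_i}\ttrue^{v_i}}$ and $\cnotgate{u_j}{\ffalse^{p_i}\ttrue^{w_i}}$, when executed (last, in the reversed list), set $v_{u_j}$ to $v_{L(v_i)}$ if the control value $v_{L(p_i)}$ is $\ttrue$ and to $v_{L(w_i)}$ if it is $\ffalse$ — here I would check that exactly one of the two controlled gates fires per wire $u_j$, and that $\ffalse\,{\tt xor}\,x = x$, so the net effect is the desired conditional copy. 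Consequently $T$ of the right-hand side is the first-order term $U$ with each $u_j$ replaced by the appropriate branch value, i.e.\ by the value of the corresponding leaf of $V$ (if the test wire holds $\ttrue$) or of $W$ (if it holds $\ffalse$); and $T$ of the left-hand side, $\iftermx{v_{L(p_i)}}{T(V)(\vec u)}{T(W)(\vec u)}$, reduces by the small-step test rule to exactly that term. The delicate bookkeeping is matching leaves of $U$ with leaves of $V$ and of $W$ under the ``same shape'' hypothesis, and confirming that the newly-added wires do not interfere with each other or with wires read elsewhere; once the shape invariant is unfolded this is routine, but it is where essentially all the content of the lemma sits.
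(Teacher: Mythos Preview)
The paper does not give a proof of this lemma: the statement is closed with a bare \qed, and the lemma is simply invoked in the proof of Theorem~\ref{th:sim}. Your case analysis on the rule of $(\rwam)$ used in the step is exactly the natural way to flesh it out, and the structure you describe---generic rules leave the circuit untouched and become the matching small-step rule, the $\ttrue/\ffalse$ rules give the equality clause, the boolean combinators and {\tt if} give a genuine $\to$ step, and the mismatched-shape {\tt if} spawns $\errorlist$---is correct. Your observation that prepended gates are executed last (because Definition~\ref{def:exec-circ} runs the gate list in reverse) is the key bookkeeping point, and you identify it clearly.

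One small correction: in the ${\tt xor}$ case you write ``one (or two, for {\tt xor}) small step(s)''. The abstract machine does record two gates, but on the term side ${\tt xor}$ is a single combinator of type $\bit\times\bit\to\bit$ and, after $T(-)(\vec u)$ has substituted concrete booleans, ${\tt xor}\,\prodterm{b_1,b_2}$ reduces to its value in one step under the ``obvious rules'' of Section~\ref{sec:betared}. The lemma as stated asserts a single $\to$, so you should not allow two; fortunately you do not need to. Apart from this slip, your plan is sound and matches what the paper leaves implicit.
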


\begin{proof}[Proof of Theorem~\ref{th:sim}]
  If
  $\emptyam(M)\rwam^*(\prodterm{p_{i_1},\ldots p_{i_k}}, (I,C), L)$,
  then there is a sequence of intermediate rewrite steps where none of
  the terms involved is the term $\errorlist$. From Lemma~\ref{lem:inv1}, one
  concludes that for all valuations $\vec{u}$ on $I$,
  $T(\emptyam(M))(\vec{u}) \to^* T(\prodterm{p_{i_1},\ldots p_{i_k}}, (I,C),
  L)(\vec{u})$.
  Choosing $\vec{u} = (b_i)_{i\in I}$, 
  $T(\emptyam(M))(\vec{u})$ is the term $M$ where each free
  variable $p_{i_j}$ has been substituted with its corresponding
  boolean $b_{i_j}$. 
  Similarly,
  $T(\prodterm{p_{i_1},\ldots p_{i_k}}, (I,C), L)$ is equal to the value$
  \prodterm{b_{i_1},\ldots b_{i_k}}$. We can conclude the proof by
  remarking that the term  $\letprodterm{x_1,\ldots
    x_n}{\prodterm{b_1,\ldots b_n}}{M} $ reduces to  $M$ where each of the free
  variables $p_{i_j}$ have been substituted with $b_{i_j}$, that is, 
  the term $T(\emptyam(M))(\vec{u})$.
\end{proof}

One would have also hoped to have a simulation result in the other
direction, stating that if a (closed) term $M:\bit^m$ reduces to a
tuple of booleans, then $\emptyam(M)$ generates a circuit computing
the same tuple. Unfortunately this is not the case, and the reason is
the particular status of the type $\bit$ and the way the {\tt if}-{\tt
  then}-{\tt else} behaves.

\begin{remark}\label{rem:if1storderbis}
  Let us re-visit the first-order constraint of the {\tt if}-{\tt
    then}-{\tt else} discussed in Remark~\ref{rem:if1storder} in the
  light of this operational semantics. Here, this test behaves as a
  regular boolean operator acting on three arguments: they need to be
  all reduced to values before continuing. This test is ``internal''
  to the circuit: both branches are evaluated during a run of the
  program.  Because it is ``internal'', the type of the branches have
  to be ``representable'': thus the constraint on first-order. This
  test does not control the execution of the program: its
  characteristic only appears at circuit-evaluation time.
  
  With this operational semantics, it is also interesting
  to note that there are two kinds of booleans: the ``internal'' type
  $\bit$, and the ``external'' type defined e.g. as $\bool =
  \tunit\oplus\tunit$. If the former does not control the flow, the
  latter does with the {\tt match} constructor. And unlike {\tt if}-{\tt
    then}-{\tt else}, {\tt match} does not have type
  constraints on its branches.
  
  The term $\errorlist$ can be explained in the light of this
  discussion. Thanks to the condition on the shape of the output
  branches of the test, it is used to enforce the fact that $\bit$
  cannot be coerced to a $\bool$. Indeed, consider the term
  $\iftermx{b}{\nil}{[\ttrue]}$: using a {\tt match} against the
  result of the test, it would allow one to use the bit $b$ for
  controlling the shape of the rest of the circuit. 
  As there is not such construct for reversible circuits, it therefore
  has to be forbidden: it is not possible to control the flow of
  execution of the program through the type $\bit$. And the fact that
  a well-typed term can produce an error is simply saying that the
  type-system is not ``strong enough'' to capture such a problem. It
  is very much related to the fact that the {\tt zip} operator on
  lists cannot be ``safely'' typed without dependent types.
\end{remark}

\section{Internalizing the abstract machine}
\label{sec:internalizing}

Instead of defining an external operational semantics as we did in
Section~\ref{sec:op-sem-rev-circ}, one can internalize the definition
of circuits in the language \PCF{}. Given a program,
provided that one chooses a reduction strategy, one can simulate the
abstract-machine semantics inside \PCF{} using a generic {\em monadic
  lifting}, close to what was proposed in~\cite{MLmonad}.

\subsection{Monadic lifting}
\label{sec:monadic-lifting}

Before going ahead with the full abstract-machine semantics, we present the
monadic lifting of \PCF{} for a monadic function-type. It is the
transposition of Haskell's monads to our language \PCF{}. The main
characteristic of the reversible abstract-machine is to change the
operational behavior of the type $\bit$: the terms $\ttrue$,
$\ffalse$, the inline $\bit$-combinators and the term construct
$\iftermx{\!\textrm{-}\!}{\!\textrm{-}\!}{}$ do not reduce as regular lambda-terms. Instead,
they trigger a side-effect, which can be simulated within a
monad.

\begin{definition}\rm
  A \define{monad} is a function-type $\mathcal{M}(-)$ together with
  two terms $\monadreturn_{\mathcal{M}}^A : A \to \mathcal{M}(A)$ and
  $\monadapp_{\mathcal{M}}^{A,B} : \mathcal{M}(A) \to (A \to
  \mathcal{M}(B)) \to \mathcal{M}(B)$.
  A \define{reversible-circuit monad} is a monad together with a type
  $\wiretype$ and the terms
  $\monadttrue_{\mathcal{M}},
  \monadffalse_{\mathcal{M}}:\mathcal{M}(\wiretype)$,
  $\monadif^{A}_{\mathcal{M}}:\wiretype\to\mathcal{M}(A)\to\mathcal{M}(A)\to\mathcal{M}(A)$,
  and
  $\monadnot^{A}_{\mathcal{M}}:\mathcal{M}(\wiretype\to\mathcal{M}(\wiretype))$,
  and finally $\monadand^{A}_{\mathcal{M}},\monadxor^{A}_{\mathcal{M}} : 
  \mathcal{M}(\wiretype\times\wiretype\to\mathcal{M}(\wiretype))
  $.
\end{definition}

\begin{definition}\rm
  Given a reversible-circuit monad $\mathcal{M}$, we inductively define the
  \define{$\mathcal{M}$-monadic lifting of a type $A$}, written
  $\Lift{M}(A)$. We omit the index
  $\mathcal{M}$ for legibility.
  {\begin{align*}
    \Lift{}(\bit) 
    &= 
    \wiretype,
    &
    \Lift{}(\tunit) 
    &= 
    \tunit,
    \\
    \Lift{}(A\to B)
    &=
    \Lift{}(A) \to \mathcal{M}(\Lift{}(B)),
    &
    \Lift{}(A\times B)
    &=
    \Lift{}(A) \times\Lift{}(B),
    \\
    \Lift{}(A\oplus B)
    &=
    \Lift{}(A) \oplus \Lift{}(B),
    &
    \Lift{}(\listtype{A})
    &=
    \listtype{\Lift{}(A)}.
  \end{align*}}
  The \define{$\mathcal{M}$-monadic lifting of a term $M$}, denoted with
  $\Lift{M}(M)$, is defined as follows. First, we set $\Lift{}(\ttrue) = \monadttrue$,
    $\Lift{}(\ffalse) 
    =
    \monadffalse$, 
    $\Lift{}({\tt and}) 
    =
    \monadand$,
    $\Lift{}({\tt xor}) 
    =
    \monadxor$ and 
    $
    \Lift{}({\tt not}) 
    =
    \monadnot$. Then
  {\begin{align*}
    \Lift{}(x) 
    = 
    \monadreturn_{\mathcal{}}~x,
    \hspace{-9ex}~~&~~\hspace{9ex}
    \Lift{}(\punit) 
    = 
    \monadreturn_{\mathcal{}}~\punit,
    \\
    \Lift{}(\lambda x.M)
    = 
    \monadreturn_{\mathcal{}}~ \lambda x.\Lift{}(M),            
    \hspace{-9ex}~~&~~\hspace{9ex}
    \Lift{}(\tt split)
    =
    \monadreturn_{\mathcal{}}~ \lambda x.\monadreturn_{\mathcal{}}~({\tt split}\,x),
    \\
   \Lift{}(MN)
    &= 
    \monadapp_{\mathcal{}}~\Lift{}(M)~
    \lambda x.
    \monadapp_{\mathcal{}}~\Lift{}(N)~
    \lambda y.xy,
    \\
    \Lift{}(\prodterm{M,N})
    &= 
    \monadapp_{\mathcal{}}~\Lift{}(M)~
    \lambda x.
    \monadapp_{\mathcal{}}~\Lift{}(N)~
    \lambda y.\monadreturn_{\mathcal{}}~\prodterm{x,y},
    \\
    \Lift{}(\pi_i(M))
    &=
    \monadapp_{\mathcal{}}~\Lift{}(M)~
    \lambda x.\monadreturn_{\mathcal{}}~\pi_i(x),
    \\
    \Lift{}(\inj_i(M))
    &=
    \monadapp_{\mathcal{}}~\Lift{}(M)~
    \lambda x.\monadreturn_{\mathcal{}}~\inj_i(x),
    \\
    \Lift{}(\letunitterm{M}{N})
    &= 
    \monadapp_{\mathcal{}}~\Lift{}(M)~
    \lambda x.
    \monadapp_{\mathcal{}}~\Lift{}(N)~
    \lambda y.\monadreturn_{\mathcal{}}~\letunitterm{x}{y},
    \\
   \Lift{}(\match{P}{z_1\!\!}{\!\!M}{z_2\!\!}{\!\!N})=\hspace{-15ex}~
    &\\
     &\hspace{-10ex}
    \monadapp_{\mathcal{}}~\Lift{}(P)~
    \lambda x.
    \match{x}{z_1\!\!}{\!\!\Lift{}(M)}{z_2\!\!}{\!\!\Lift{}(N)},
    \\
    \Lift{}(Y(M))
    &=\\&\hspace{-10ex}\monadapp_{\mathcal{}}~\Lift{}(M)~
    \lambda f.
    \monadreturn~(Y(\lambda y.\lambda z.\monadapp\,(y\,\punit)\,f))\punit
    \\
    \Lift{}(\iftermx{\!P\!}{\!M\!}{\!N\!})
    & =
    \monadapp_{\mathcal{}}~\Lift{}(P)~
    \lambda x.
    ((\monadif_{\mathcal{}}~x)~\Lift{}(M))~\Lift{}(N)
  \end{align*}}
\end{definition}

\begin{remark}\label{rem:cbv}
  Note that in this definition of the lifting, we followed a
  call-by-value approach: the argument $N:\Lift{M}(A)$ of a function
  $M:A \to \Lift{M}(B)$ is first reduced to a value before being fed
  to the function. This will be discussed in Section~\ref{sec:cbv}.
\end{remark}

The fact that a monad is equipped with $\monadttrue$, $\monadffalse$,
$\monadxor$, $\monadand$, $\monadnot$ and $\monadif$ is not a
guarantee that the lifting will behave as expected. One has to choose
the right monad for it. It is the topic of
Section~\ref{sec:rev-circ-from-monad}.  However, in general this
monadic-lifting operation preserves types (proof by induction on the
typing derivation).

\begin{theorem}
  Provided that $x_1:A_1,\ldots,x_n:A_n\vdash M:B$ is valid, so is the judgment
  $
  x_1:\Lift{M}(A_1),\ldots,x_n:\Lift{M}(A_n)\vdash
  \Lift{M}(M):\mathcal{M}(\Lift{M}(B)).
  $\qed
\end{theorem}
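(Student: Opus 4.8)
The plan is to prove the statement by induction on the structure of the typing derivation of $x_1:A_1,\ldots,x_n:A_n\vdash M:B$, since the lifting $\Lift{M}(M)$ is defined by recursion on the term syntax and the typing rules are syntax-directed. The statement to establish is that $x_1:\Lift{M}(A_1),\ldots,x_n:\Lift{M}(A_n)\vdash \Lift{M}(M):\mathcal{M}(\Lift{M}(B))$. For the base cases --- the variable rule, $\punit$, $\ttrue$, $\ffalse$, the inline combinators ${\tt not}$, ${\tt xor}$, ${\tt and}$, and $\errorlist$ --- one simply checks the types directly: e.g. $\Lift{M}(x)=\monadreturn~x$ has type $\mathcal{M}(\Lift{M}(A))$ given $x:\Lift{M}(A)$ by the typing of $\monadreturn_{\mathcal{M}}^{\Lift{M}(A)}$; $\Lift{M}(\ttrue)=\monadttrue_{\mathcal{M}}:\mathcal{M}(\wiretype)=\mathcal{M}(\Lift{M}(\bit))$; and $\Lift{M}({\tt and})=\monadand_{\mathcal{M}}:\mathcal{M}(\wiretype\times\wiretype\to\mathcal{M}(\wiretype))=\mathcal{M}(\Lift{M}(\bit\times\bit\to\bit))$, using that $\Lift{M}(\bit\times\bit\to\bit)=(\wiretype\times\wiretype)\to\mathcal{M}(\wiretype)$. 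For $\errorlist$ one uses the typing rule $\Delta\vdash\errorlist:A$ instantiated at $A=\mathcal{M}(\Lift{M}(B))$.

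For the inductive cases, each clause of the definition of $\Lift{M}(-)$ is a fixed term built from $\monadreturn$, $\monadapp$, the native constructors, and the liftings of subterms, so the verification is a routine typing computation using the induction hypotheses. I would do a representative few and assert the rest are analogous. For application, given $\Delta\vdash M:A\to B$ and $\Delta\vdash N:A$, the IH gives $\Lift{M}(\Delta)\vdash\Lift{M}(M):\mathcal{M}(\Lift{M}(A)\to\mathcal{M}(\Lift{M}(B)))$ and $\Lift{M}(\Delta)\vdash\Lift{M}(N):\mathcal{M}(\Lift{M}(A))$; then in $\monadapp~\Lift{M}(M)~\lambda x.\monadapp~\Lift{M}(N)~\lambda y.xy$ one has $x:\Lift{M}(A)\to\mathcal{M}(\Lift{M}(B))$, $y:\Lift{M}(A)$, so $xy:\mathcal{M}(\Lift{M}(B))$, hence $\lambda y.xy:\Lift{M}(A)\to\mathcal{M}(\Lift{M}(B))$, and applying $\monadapp^{\Lift{M}(A),\Lift{M}(B)}$ twice gives the result type $\mathcal{M}(\Lift{M}(B))$. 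For $\lambda x.M$, the IH on $\Delta,x:A\vdash M:B$ gives $\Lift{M}(\Delta),x:\Lift{M}(A)\vdash\Lift{M}(M):\mathcal{M}(\Lift{M}(B))$, so $\lambda x.\Lift{M}(M):\Lift{M}(A)\to\mathcal{M}(\Lift{M}(B))$, and $\monadreturn$ wraps it to $\mathcal{M}(\Lift{M}(A\to B))$ as desired. For the {\tt match} case one uses that $\Lift{M}(A\oplus B)=\Lift{M}(A)\oplus\Lift{M}(B)$ so the native {\tt match} typing rule applies with both branches ending in type $\mathcal{M}(\Lift{M}(C))$; for $\iftermx{P}{M}{N}$ one uses $\monadif^{\Lift{M}(C)}_{\mathcal{M}}$ together with the IHs on the three subterms (noting the first-orderness of $C$ plays no role here, only at run time).

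The one genuinely delicate case is the fixpoint $Y(M)$. Here $\Delta\vdash M:A\to A$, so the IH gives $\Lift{M}(\Delta)\vdash\Lift{M}(M):\mathcal{M}(\Lift{M}(A\to A))$, i.e. of type $\mathcal{M}(\Lift{M}(A)\to\mathcal{M}(\Lift{M}(A)))$. I need to check that the defining term $\monadapp~\Lift{M}(M)~\lambda f.\monadreturn~(Y(\lambda y.\lambda z.\monadapp~(y~\punit)~f))\punit$ typechecks at $\mathcal{M}(\Lift{M}(A))$ --- equivalently, since $Y$ only defines functions, $A$ must itself be a function type $A=A_1\to A_2$, so $\Lift{M}(A)=\Lift{M}(A_1)\to\mathcal{M}(\Lift{M}(A_2))$. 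Inside the $\monadapp$, $f:\Lift{M}(A)\to\mathcal{M}(\Lift{M}(A))$; I want $y:\tunit\to\mathcal{M}(\Lift{M}(A))$, $z$ a dummy of type $\tunit$, so that $\lambda z.\monadapp~(y~\punit)~f:\tunit\to\mathcal{M}(\Lift{M}(A))$ and $Y(\lambda y.\lambda z.\ldots):\tunit\to\mathcal{M}(\Lift{M}(A))$ via the native $Y$ rule at the self-functional type $(\tunit\to\mathcal{M}(\Lift{M}(A)))\to(\tunit\to\mathcal{M}(\Lift{M}(A)))$; then $\monadreturn~(\cdots)\punit$ has type $\mathcal{M}(\Lift{M}(A))$... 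I would track these type annotations carefully, matching them against the (omitted) typing of the $z$-binder, and conclude the outer $\monadapp^{\Lift{M}(A)\to\mathcal{M}(\Lift{M}(A)),\,\Lift{M}(A)}$ yields $\mathcal{M}(\Lift{M}(A))$. This thunking-via-$\tunit$ trick is where any sign error or mismatch will show up, so it is the step deserving the most care; all other cases reduce to mechanically chaining the IHs through the typing rules of Table~\ref{tab:typ-rules} and the monad signature.
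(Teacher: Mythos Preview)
Your approach is exactly the paper's: the proof is by induction on the typing derivation, and the paper says nothing more than that (the parenthetical remark immediately preceding the theorem statement is the entire proof). Your case analysis is considerably more detailed than what the paper provides, and you correctly flag the fixpoint case as the one requiring care.
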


\subsection{Reversible circuits from monadic lifting}
\label{sec:rev-circ-from-monad}

All the structure of the abstract machine can be encoded in
the language \PCF{}.
A wire is a natural number. A simple way to represent them
is with the type $\wiretype ~{:}{=}~ \listtype{\tunit}$. The number
$0$ is the empty list while the successor of $n$ is
$(\cons{\punit}{n})$.
A gate is then $\gatetype ~{:}{=}~
\wiretype\times\listtype{\wiretype\times\bit}$.
A raw circuit is  $\listtype{\gatetype}$.

We now come to the abstract machine. In the formalization of
Section~\ref{sec:op-sem-rev-circ}, we were using a state with a circuit and a
linking function. In this internal representation, the linking
function is not needed anymore: the computation directly acts on
wires. However, the piece of information that is still needed is the
next fresh value. The state is encapsulated in $\statetype ~{:}{=}~
\listtype{\gatetype}\times\wiretype$.
Finally, given a type $A$, we write $\circtype(A)$ for the type
$\statetype \to (\statetype\times A)$: this is a computation
generating a reversible circuit.

The type operator $\circtype(-)$ can be equipped with the structure of
a reversible-circuit monad, as follows. First, it is obviously a
state-monad, making the two first constructs automatic:
\[
\monadreturn :=
  \lambda x.\lambda s.(s,x)
\textrm{ and }
  \monadapp
  :=
  \lambda xf.\lambda s.\letprodterm{s',a}{x\,s}{(f\,a)\,s'}.
\]
The others are largely relying on the fact that \PCF{} is expressive
enough to emulate what was done in Section~\ref{sec:op-sem-rev-circ}.
Provided that $\succterm$ stands for the successor function, we can
$\monadffalse$ as the lambda-term
$\lambda
      s.\letprodterm{c,w}{\!\!s}{\prodterm{\prodterm{c,\succterm\,w},w}}$
and $\monadttrue$ as  the lambda-term
$\lambda
      s.\letprodterm{c,w}{\!\!s}{\prodterm{\prodterm{\prodterm{w, \nil} {:}{:} c,\succterm\,w},w}}
$.
Note how the definition reflects the reduction rules corresponding to
$\ttrue$ and $\ffalse$ in Table~\ref{tab:rw-abm}: in the case of
$\ffalse$, the returned wire is the next fresh one, and the state is
updated by increasing the ``next-fresh'' value by one unit. In the
case of $\ttrue$, on top of this we add a not-gate to the list of
gates in order to flip the value of the returned wire.
The definitions of $\monadnot$, $\monadand$ and $\monadxor$ are
similar. For $\monadif$, one capitalizes on the fact that we know the
structure of the branches of the test, as they are of first-order
types. One can then define a zip-operator $A^0\times A^0\to A^0$ for
each first-order type $A^0$.

\subsection{Call-by-value reduction strategy}
\label{sec:cbv}

As was mentioned in Remark~\ref{rem:cbv}, the monadic lifting
intuitively follows a call-by-value approach. It can be formalized by
developing a call-by-value reduction strategy for circuit-abstract
machines.
Such a definition follows the one of the reduction proposed in
Section~\ref{sec:op-sem-rev-circ}: we first design a notion of {\em
  call-by-value evaluation context} $E[-]$ characterizing the
call-by-value redex that can be reduced.

\begin{table}
  \[\def\nl{\\[1ex]}
    \begin{array}{r@{}l}
    (E[(\lambda x.M)V], RC, L) &\cbvam (E[M{[}V/x{]}], RC, L)
    \nl
    (E[\pi_i\prodterm{V_1,V_2}], RC, L) &\cbvam (E[V_i], RC, L)
    \nl
    (E[\letunitterm{\punit}{M}], RC, L)
    &\cbvam (E[M], RC, L)
      \nl
    (E[\splitlist\,V], RC, L) &\cbvam (E[V], RC, L)
                                \nl
   (E[\match{\inj_i(V)}{x_1}{M_1}{x_2}{M_2}], RC, L)
    \hspace{-10ex}\nl &\cbvam
    (E[M_i{[}V/x_i{]}], RC, L)
    \nl
    (E[Y(\lambda x.M)], RC, L)
    &\cbvam
    (E[M[Y(\lambda x.M)/x]], RC, L)
    \end{array}
  \]
  \vspace{-2ex}
  \caption{Call-by-value for circuit-generating abstract-machine:
    generic rules.}
  \label{tab:amcbv}
\end{table}

\begin{table}
  {\begin{align*}
  (E[\ffalse],RC,L)\cbvam(E[p_{i_0}], RC, L')
  ~~&~~
  (E[\ttrue],RC,L)\cbvam(E[p_{i_0}], (\notgate{i_0})::RC,
  L')
  \\
  (E[{\tt not}~p_i],RC,L)&\cbvam(E[p_{i_0}],\cnotgate{i_0}{\ffalse^i}::RC,
  L')
  \\
  (E[{\tt and}~p_i~p_j],RC,L)&\cbvam(E[p_{i_0}],
  \cnotgate{i_0}{\ttrue^i\ttrue^j} :: RC, L')
  \\
  (E[{\tt xor}~p_i~p_j],RC,L)&\cbvam(E[p_{i_0}],\cnotgate{i_0}{(i,\ttrue)}::
  \cnotgate{i_0}{\ttrue^j}::RC, L')
  \\
  (E[\iftermx{p_i}{V}{W}], RC, L)&\cbvam
                                   \\
&\hspace{-10ex}
  \left\{
    \begin{array}{l@{\quad}l}
      (E[U], RC',L'') & \textrm{$V$ and $W$ of the same shape}
      \\
      \errorlist & \textrm{otherwise}
    \end{array}\right.
  \end{align*}}
\caption{Call-by-value for circuit-generating abstract-machines: rules
for booleans}
\label{tab:cbv-abm}
\end{table}

\begin{definition}
A \define{call-by-value context} is a beta-context with the following
grammar
\[
\begin{array}{lll}
E[-]
&{:}{:}{=}&
[-]
\bor
(E[-])N \bor V(E[-]) \bor
\prodterm{E[-],N} \bor
\prodterm{V,E[-]} \bor
\\
&&
\pi_1(E[-]) \bor
\pi_2(E[-]) \bor
\letunitterm{E[-]}{N} \bor
\letunitterm{V}{E[-]} \bor
\iftermx{E[-]}{N}{P} \bor
\\
&&
\iftermx{V}{E[-]}{P} \bor
\iftermx{V}{W}{E[-]} \bor
\inj_1(E[-]) \bor 
\\
&&
\inj_2(E[-]) \bor
\match{E[-]}{x}{M}{y}{N} \bor
Y(E[-]).
\end{array}
\]
We define the \define{call-by-value reduction strategy} on
circuit-generating abstract machines as shown in Tables~\ref{tab:amcbv}
and~\ref{tab:cbv-abm}.
\end{definition}

\begin{remark}
  Essentially, the generic rules of Table~\ref{tab:ambetagen} are
  turned into their call-by-value version in the standard way.  For
  example, we require that
  $(E[(\lambda x.M)V], RC, L) \cbvam (E[M{[}V/x{]}], RC, L)$ happens
  only when $V$ is a value. Similarly, the rules of
  Table~\ref{tab:rw-abm} are reflected in Table~\ref{tab:cbv-abm},
  replacing $C[-]$ with $E[-]$.

  Remark however that the reduction strategy does not exactly match
  the rewrite system $(\rwam)$ in one special case:the rewrite rule
  concerning the fixpoint. We chose to modify it in order for the
  fixpoint to behave in a call-by-value manner: we embedded two steps
  of $(\rwam)$ into one step of the strategy.
  One can then show that if $(M,RC,L)\cbvam(M',RC',L')$
  then $(M,RC,L)\rwam^*(M',RC',L')$.
\end{remark}

In the light of this reduction strategy and of the monadic lifting of
the previous section, one can now formalize what was mentioned in
Remark~\ref{rem:cbv}. 
First, one can turn an abstract machine into a lifted term.

\begin{definition}\label{def:amlift}
  Let $x_1:\bit,\ldots,x_{n+k}:\bit\vdash M:B$ and let $(M,(C,I),L)$
  be an abstract machine where $I=\{1\ldots n\}$. Then we define
  $\Lift{}(M,(C,I),L)$ as the term\vspace{-2ex}
  \[\vspace{-2ex}
    \left(\Lift{}(M)[\overline{L(x_{n+1})}/x_{n+1}\ldots
    \overline{L(x_{n+k})}/x_{n+k}]\right)
    \prodterm{\overline{C},\succterm\,\overline{\max({\rm
        Range}(L))}},
  \]
  where $\overline{C}$ is the representation of $C$ as a term of type
  $\listtype{\gatetype}\times\wiretype$, and where $\overline{n}$ with
  $n$ an integer is the representation of $n$ as a term of type
  $\listtype{\unittype}$.
\end{definition}

Then, provided that $\simeq_\beta$ stands for the reflexive, symmetric
and transitive closure of the beta-reduction on terms and choosing $M$
and $(M,(C,I),L)$ as in Definition~\ref{def:amlift}:

\begin{theorem}
  Suppose that $(M,RC,L)\cbvam(M',RC',L')$. Then 
  $\Lift{}(M,RC,L)$ is beta-equivalent to $\Lift{}(M',RC',L')$.
  \qed
\end{theorem}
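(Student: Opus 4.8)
The statement to prove is that if $(M,RC,L)\cbvam(M',RC',L')$ then $\Lift{}(M,RC,L)\simeq_\beta\Lift{}(M',RC',L')$, where the lift of an abstract machine is as in Definition~\ref{def:amlift}. The plan is to proceed by case analysis on which rewrite rule of Tables~\ref{tab:amcbv} and~\ref{tab:cbv-abm} was applied, in each case unfolding both sides according to the definition of the monadic lifting $\Lift{}(-)$ on terms and the concrete state-monad implementation $\circtype(-)$ from Section~\ref{sec:rev-circ-from-monad}, and then checking that the two lifted terms beta-reduce to a common term. A crucial preliminary observation, which I would isolate as an auxiliary lemma, is a \emph{substitution-commutation} property: for a value $V$, $\Lift{}(M[V/x])$ is beta-equivalent to $\Lift{}(M)[\Lift{}(V)/x]$, and moreover when $V$ is a variable $p$ or the boolean-wire placeholder $p_{i_0}$, $\Lift{}(p) = \monadreturn\,p$ collapses cleanly under the monad laws so that the substitution $\overline{L'(x)}/x$ appearing in Definition~\ref{def:amlift} lines up with the syntactic substitution in the rewrite rule. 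One also needs the standard monad-law identities for $\monadreturn$ and $\monadapp$ — namely $\monadapp\,(\monadreturn\,a)\,f \simeq_\beta f\,a$ and the associativity law — which here are just finitely many beta-steps since $\monadreturn$ and $\monadapp$ are explicit lambda-terms.

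For the generic rules of Table~\ref{tab:amcbv} the argument is routine: these are exactly the call-by-value beta-reductions, $RC$ and $L$ are untouched, and the lifting is defined homomorphically through $\monadapp$, so e.g.\ for $(\lambda x.M)V$ one uses that $\Lift{}((\lambda x.M)V) = \monadapp\,(\monadreturn\,(\lambda x.\Lift{}(M)))\,(\lambda x.\monadapp\,\Lift{}(V)\,(\lambda y.xy))$; since $V$ is a value, $\Lift{}(V)$ is of the form $\monadreturn\,(\ldots)$, so two applications of the first monad law reduce this to $\Lift{}(M)[\Lift{}(V)/x]$, which by the substitution lemma is $\simeq_\beta\Lift{}(M[V/x])$. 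The projection, $\letunitterm{\punit}{-}$, $\splitlist$, and $\match$ cases are analogous. The fixpoint case $Y(\lambda x.M)$ is the one place where, as the paper notes, the strategy fuses two $\rwam$-steps into one; here I would carefully unfold $\Lift{}(Y(\lambda x.M))$, which contains the term $Y(\lambda y.\lambda z.\monadapp\,(y\,\punit)\,f)\,\punit$, perform the two corresponding $\beta$-unfoldings of this internal $Y$, and match the result against $\Lift{}(M[Y(\lambda x.M)/x])$ using the substitution lemma — this bookkeeping is the fiddliest of the generic cases but is still a finite calculation.

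The boolean rules of Table~\ref{tab:cbv-abm} are where the state component actually changes, and verifying these is the main obstacle. Here one must unfold the explicit definitions of $\monadffalse$, $\monadttrue$, $\monadnot$, $\monadand$, $\monadxor$, $\monadif$ given in Section~\ref{sec:rev-circ-from-monad}, apply them to the encoded state $\prodterm{\overline{C},\succterm\,\overline{\max({\rm Range}(L))}}$, and check that the resulting state is the encoding of $(C',\text{new fresh value})$ matching $RC'$ and $L'$, while the returned value is the encoded wire $\overline{i_0}=\overline{L'(p_{i_0})}$. For $\ffalse$ this is immediate from the definition; for $\ttrue$ one additionally sees the not-gate prepended, matching $(\notgate{i_0})::RC$; for ${\tt not}$, ${\tt and}$, ${\tt xor}$ one checks that the encoded gate list produced by the monadic operator is exactly the list of gates prescribed in the rewrite rule (this requires knowing the — stated but not displayed — definitions of $\monadnot,\monadand,\monadxor$, which I would take as given to match Table~\ref{tab:cbv-abm}). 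The genuinely delicate case is $\monadif$: one invokes the first-order zip-operator $A^0\times A^0\to A^0$, must track the allocation of the fresh garbage wires $u_1,\dots,u_k$ — where $k$ is the number of $\bit$-leaves in the shared shape of $V,W$ — and must verify that the series of controlled-not gates $\cnotgate{u_j}{\ttrue^{p_i}\ttrue^{v_j}}$ and $\cnotgate{u_j}{\ffalse^{p_i}\ttrue^{w_j}}$ emitted by the monadic definition is precisely the list $RC'$ gives, and that the new linking data $L''$ agrees with the fresh-wire counter in the final encoded state. The shape-mismatch sub-case produces $\errorlist$ on both sides, so there is nothing to check there. Throughout, "beta-equivalent" — being symmetric and transitive — means we never need to worry about the \emph{direction} of reduction, which simplifies the matching considerably: it suffices to exhibit, in each case, a common reduct of the two lifted terms.
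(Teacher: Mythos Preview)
The paper gives no proof for this theorem: the statement is closed with a bare \qed\ and nothing more. Your case analysis is therefore not competing against anything in the text, and it is the natural line of attack; the overall structure (monad laws plus a substitution lemma, then per-rule verification) is sound.

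Two points deserve tightening. First, your substitution lemma ``$\Lift{}(M[V/x]) \simeq_\beta \Lift{}(M)[\Lift{}(V)/x]$'' is ill-typed as written: inside $\Lift{}(M)$ the variable $x$ has type $\Lift{}(A)$, whereas $\Lift{}(V)$ has type $\mathcal{M}(\Lift{}(A))$. The correct formulation is that every value $V$ satisfies $\Lift{}(V)\simeq_\beta\monadreturn\,V^{\!*}$ for some value $V^{\!*}$ (by induction on $V$, treating pairs and injections), and then $\Lift{}(M[V/x])\simeq_\beta\Lift{}(M)[V^{\!*}/x]$. Your parenthetical about the variable case shows you feel this, but the general statement is what the $(\lambda x.M)V$, $\pi_i$, and $\match{}{}{}{}{}$ cases actually need.

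Second, and more substantively, every rule in Tables~\ref{tab:amcbv} and~\ref{tab:cbv-abm} has the shape $(E[r],RC,L)\cbvam(E[r'],RC',L')$ for an \emph{arbitrary} call-by-value context $E$, yet your plan only analyses the redex $r\leadsto r'$. To bridge the gap you need a context lemma: for each call-by-value evaluation context $E$ there is a term $K_E$ with $\Lift{}(E[N])\simeq_\beta\monadapp\,\Lift{}(N)\,K_E$, proved by induction on $E$ using the associativity of $\monadapp$ (which you do mention) together with the fact that the already-evaluated positions in $E$ are values and hence lift to $\monadreturn$-forms. With this lemma the generic rules reduce to showing $\Lift{}(r)\,s\simeq_\beta\Lift{}(r')\,s$, and the boolean rules reduce to showing $\Lift{}(r)\,s\simeq_\beta\prodterm{s',\overline{i_0}}$ where $s'$ is the encoding of the updated state --- exactly the per-redex calculations you sketch. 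Without the context lemma the boolean cases in particular do not close, since the state threaded to $K_E$ differs on the two sides and you need $\monadapp$-associativity to justify that this is harmless.
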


Provided that the beta-reduction is confluent, this essentially says
that the abstract-machine semantics can be simulated with the monadic
lifting.

\begin{corollary}\label{cor}
  If ${}\vdash M:\bit^m$ and 
  $
    \emptyam(M)\cbvam(\prodterm{x_1,\ldots x_m},(C,I),L),
  $
  then the term $\pi_1(\Lift{}(M))$ is beta-equivalent to $\overline{C}$,
  where $\overline{C}$ is the representation of $C$ as a term, as
  described in Definition~\ref{def:amlift}.
  \qed
\end{corollary}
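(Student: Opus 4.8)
The plan is to read the corollary off from the theorem immediately preceding it, by iterating that theorem along the whole call-by-value run and then unfolding the two endpoints of the resulting chain of beta-equivalences. First I would note that a successful reduction $\emptyam(M)\cbvam^{*}(\prodterm{x_1,\ldots x_m},(C,I),L)$ consists exclusively of genuine abstract-machine steps: none of them can produce the bare term $\errorlist$, since $\errorlist$ is not of the shape $(N,RC',L')$ and its appearance would terminate the chain before reaching the value $\prodterm{x_1,\ldots x_m}$. Hence the theorem above applies to every single step, and transitivity of $\simeq_\beta$ yields
\[
  \Lift{}(\emptyam(M)) \;\simeq_\beta\; \Lift{}(\prodterm{x_1,\ldots x_m},(C,I),L).
\]

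Next I would compute both sides using Definition~\ref{def:amlift}. On the left, since $M$ is closed the empty machine $\emptyam(M)$ has no input wires, empty gate list, and empty linking function, so $\Lift{}(\emptyam(M))$ is simply $\Lift{}(M)$ applied to the initial state $\prodterm{\nil,\succterm\,\overline{0}}$; thus ``$\pi_1(\Lift{}(M))$'' in the statement is shorthand for the raw-circuit component extracted from running $\Lift{}(M)$ on the empty state. On the right, Definition~\ref{def:amlift} gives the term $\Lift{}(\prodterm{x_1,\ldots x_m})$ in which each $x_j$ has been replaced by the wire numeral $\overline{L(x_j)}$, applied to the state $\prodterm{\overline{C},\succterm\,\overline{\max({\rm Range}(L))}}$. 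Using the state-monad definitions of $\monadreturn$ and $\monadapp$, one checks that $\Lift{}$ of a tuple of variables is pure: it beta-reduces to $\lambda s.(s,\prodterm{x_1,\ldots x_m})$, threading the state untouched. Feeding it the state $\prodterm{\overline{C},\succterm\,\overline{\max({\rm Range}(L))}}$ and projecting onto the first (state) component returns that very state, whose own first component is $\overline{C}$. Splicing the two computations through the beta-equivalence displayed above then gives that the raw-circuit component of $\Lift{}(M)$ run on the empty state is beta-equivalent to $\overline{C}$, which is the claim; confluence of beta-reduction (already invoked in the surrounding discussion) is what makes this statement meaningful.

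The step I expect to demand the most care is not conceptual but bookkeeping. First, one must reconcile the loose notation $\pi_1(\Lift{}(M))$ with the actual type of $\Lift{}(M)$, namely $\circtype(\wiretype^{m})=\statetype\to(\statetype\times\wiretype^{m})$ with $\statetype=\listtype{\gatetype}\times\wiretype$: ``extracting the circuit'' really means applying $\Lift{}(M)$ to the empty state $\prodterm{\nil,\succterm\,\overline{0}}$ and taking the gate-list part $\pi_1(\pi_1(\cdot))$, and one must be honest about this abuse. Second, one must carry out the routine but fiddly verification — unfolding the nested $\monadapp$/$\monadreturn$ structure of $\Lift{}(\prodterm{x_1,\ldots x_m})$ against the state-monad definitions — that the final state after running this pure term is literally $\prodterm{\overline{C},\succterm\,\overline{\max({\rm Range}(L))}}$ and not merely some beta-variant of it, so that a single projection indeed lands on $\overline{C}$ on the nose. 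Once these points are discharged, the corollary is immediate from the theorem.
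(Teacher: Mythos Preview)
Your approach is correct and matches the paper's intent: the paper gives no explicit proof at all, marking the corollary with \qed{} as an immediate consequence of the preceding theorem, and what you spell out---iterating that theorem along the run, then unfolding Definition~\ref{def:amlift} at both endpoints and reading off the gate-list component---is precisely the argument the \qed{} is standing in for. Your observations about the abuse of notation in $\pi_1(\Lift{}(M))$ and the need to check that $\Lift{}$ of a tuple of variables is state-pure are genuine bookkeeping points the paper glosses over, not departures from its approach.
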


\section{Synthesis of quantum oracles}
\label{sec:use-cases}
\label{sec:reduc-size-circ}

A rapid explanation is needed here: In quantum computation, one does
not deal with classical bits but with the so-called \define{quantum
  bits}. At the logical level, a quantum algorithm consists of one or
several \define{quantum circuits}, that is, reversible circuits with
quantum bits flowing in the wires.

Quantum algorithms are used to solve classical problems. For example:
factoring an integer~\cite{shor}, finding an element in a unordered
list~\cite{grover}, finding the solution of a system of linear
equations~\cite{QLSA}, finding a triangle in a graph~\cite{TF}, {\em etc}.
In all of these algorithms, the description of the problem is a
completely non-reversible function $f:\bit^n \to \bit^m$ and it has to
be encoded as a reversible circuit computing the function
$\bar{f}:\bit^n\times\bit^m \to \bit^n\times\bit^m$ sending $(x,y)$ to
$(x, y\,{\tt xor}\,f(x))$, possibly with some auxiliary wires set back
to $0$.

\begin{wrapfigure}{r}{0.4\textwidth}
\vspace{-30pt}
\begin{minipage}{0.39\textwidth}
\scriptsize$
\begin{array}{@{}c@{}}
\xy
0;<1em,0em>:
(0,0.5);(2,0.5)**\dir{-};
(2,2)**\dir{-};
(2,3)**\dir{.};
(2,4.5)**\dir{-};
(0,4.5)**\dir{-};
(0,3)**\dir{-};
(0,2)**\dir{.};
(0,0.5)**\dir{-};
(1,2.5)*{T_f};
(0,1);(-1,1)*!R{{0}}**\dir{-};
(0,4);(-1,4)*++!R{x}**\dir{-};
(2,1);(3.5,1)*{\bullet}="a"**\dir{-};(5,1)**\dir{-};
(2,4);(5,4)**\dir{-};
(7,4);(8,4)*++!L{x}**\dir{-};
(-1,2.5)*+!R{{0\cdots0}};
(5,0.5);(7,0.5)**\dir{-};
(7,2)**\dir{-};
(7,2)**\dir{.};
(7,4.5)**\dir{-};
(5,4.5)**\dir{-};
(5,3)**\dir{-};
(5,2)**\dir{.};
(5,0.5)**\dir{-};
(6,2.5)*{T_f^{\textrm{-}1}};
(7,1);(8,1)**\dir{-};
(-1,-0.5)*++!R{y};(3.5,-0.5)*{\oplus}="b"**\dir{-};(8,-0.5)*++!L{y\,
  {\tt xor}\, f(x)}**\dir{-}
\multiline{0}{2}{-1}{}
\multiline{2}{2}{5}{}
\multiline{7}{2}{8}{}
\save
(8.3,3).(8.3,1)!C*+\frm{\}}
\restore
\save
"a";"b"**\dir{-};
(8,2)*+++!L{\begin{array}{@{}l@{}}\textrm{back to}\\\textrm{state $0$}\end{array}}
\restore
\endxy
\end{array}
$
\end{minipage}
\vspace{-20pt}
\end{wrapfigure}
A canonical way to produce such a circuit is with a {\em Bennett
  embedding}. The procedure is shown on the right. First the
Landauer embedding $T_f$ of $f$ is applied. Then the output of the
circuit is {\tt xor}'d onto the $y$ input wires, and finally the
inverse of $T_f$ is applied. In particular, all the auxiliary wires
are back to the value $0$ at the end of the computation.

The method we propose in this paper offers a procedure for generating
the main ingredient of this construction: the Landauer embedding.
One just has to encode the problem in the language $\PCF$ (or
extension thereof), possibly test and verify the program, and generate
a corresponding reversible circuit through the monadic lifting.
Theorems~\ref{th:eq-am-beta} and~\ref{cor} guarantee that
the monadic lifting of the program will give a circuit computing
the same function as the original program.

This algorithm was implemented within the language Quipper, and used
for non-trivial oracles~\cite{Quipper,PLDI}. Note that Quipper is not
the only possible back-end for this generic monadic lifting: nothing forbids us
to write a back-end in, say, Lava~\cite{Claessen-2001}.

\begin{example}\label{ex:addercircuit}
  The first example of code we saw (Example~\ref{ex:adder}) computes
  an adder. One can run this code to generate a reversible adder:
  Figure~\ref{fig:adder-4-raw} shows the circuit generated when fed
  with 4-bits integers. One can see 4 blocks of pairs of
  similar shapes. 
\begin{figure}[tb]
  \centering
  \includegraphics[width=4in]{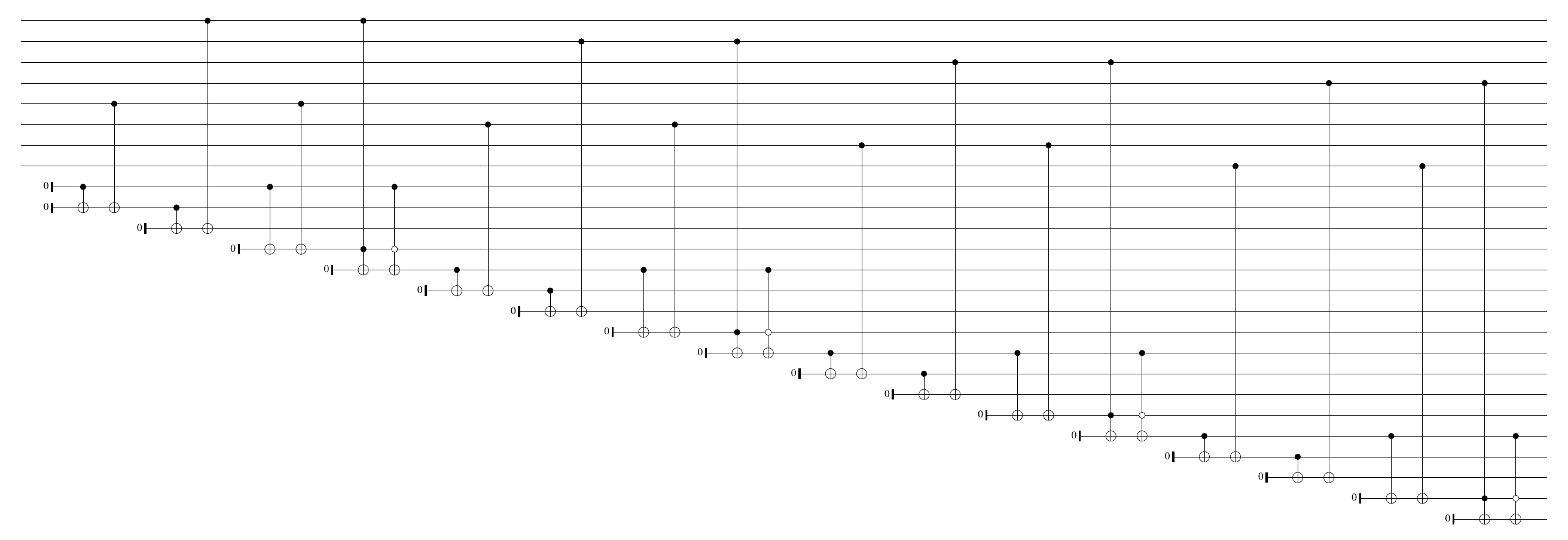}
  \caption{Reversible adder for 4-bit integers.}
  \label{fig:adder-4-raw}
\end{figure}
\end{example}
\begin{example}\label{ex:sin}
  In the oracle for the QLSA algorithm~\cite{QLSA,QLSAnous}, one has to solve a
  system of differential equations coming from some physics problem
  using finite elements method. The bottom line is that it involves
  analytic functions such as sine and atan2.

  Using fixed-point real numbers on 64 bits, we wrote a sine function
  using a Taylor expansion approximation. In total, we get a
  reversible circuit of 7,344,140 multi-controlled gates (with
  positive and negative controls). The function atan2 was defined
  using the CORDIC method. The generated circuit contains 34,599,531
  multi-controlled gates. These two functions can be found in
  Quipper's distribution~\cite{Quipper}.
\end{example}

\subsection{Efficiency of the monadic lifting}
The monadic lifting proposed in this paper generates circuits that are
efficient in the sense that the size of a generated circuit is linear
in the number of steps it takes to evaluate the corresponding
program. This means that any program running in polynomial time upon
the size of its input generates a polynomial-sized circuit. Without
any modification or optimization whatsoever, the technique is
therefore able to generate an ``efficient'' circuit for an arbitrary,
conventional algorithm. This is how the circuit for the function sine
cited in Example~\ref{ex:sin} was generated: first, a conventional
implementation was written and tested. When ready the lifting was
performed, generating a circuit.

\subsection{Towards a complete compiler}
Compared to other reversible compilers~\cite{revs}, the approach taken
in this paper considers the construction of the circuit as a process
that can be completely automatized: the stance is that it should be
possible to take a classical, functional program with conventional
inductive datatypes and let the compiler turn it into a reversible
circuit without having to interfer (or only marginally). We do not
claim to have a final answer: we only aim at proposing a research path
towards such a goal.

A first step towards a more complete compiler for \PCF{} would involve
optimization passes on the generated circuits.  Indeed, as can be
inferred from a quick analysis of Figure~\ref{fig:adder-4-raw}, if
monadic lifting generates efficient circuits it does not produces
particularly lean circuits.

\subsection{Towards optimization of generated circuits}
\label{sec:optim}

There is a rich literature on optimization of reversible
circuits~\cite{rev-survey-2011,revkit,maslov-templates-iccad03}, some
also considering positive and negative
controls~\cite{soeken13white}. All of these works are relevant for
reducing the size of the circuits we get.

The purpose of this section is not to aggressively optimize the
circuits we get from the monadic lifting, 
but instead to reason on the particular shapes we obtain from
this monadic semantics. If the reduction of a lambda-term into a
reversible circuit is so verbose, it is partly due to the fact that
garbage wires are created for every single intermediate result. We aim
at pointing out the few optimization rules stemming from the circuit
generation and reflecting these low-level optimizations to
high-level program transformations.

The reversible adder of Figure~\ref{fig:adder-4-raw} is very
verbose. By applying the simple optimization schemes presented
in this section, one gets the smaller circuit of
Figure~\ref{fig:adder-4-optim}. One clearly sees the carry-ripple
structure, and it is in fact very close to known reversible
ripple-carry adders (see e.g.~\cite{feynman85quantum}). In the following
discussion, we hint at how program transformations could be applied in
order to get a circuit of compactness similar to the one obtained from
the low-level circuit rewrites.

\smallskip
\noindent
{{\bf Algebraic optimizations.}}~
Let us consider the example of the 1-bit adder of
Figure~\ref{fig:optim}, from the code of Example~\ref{ex:adder}.
\begin{figure}[tb]
  \centering
  \includegraphics[width=3in]{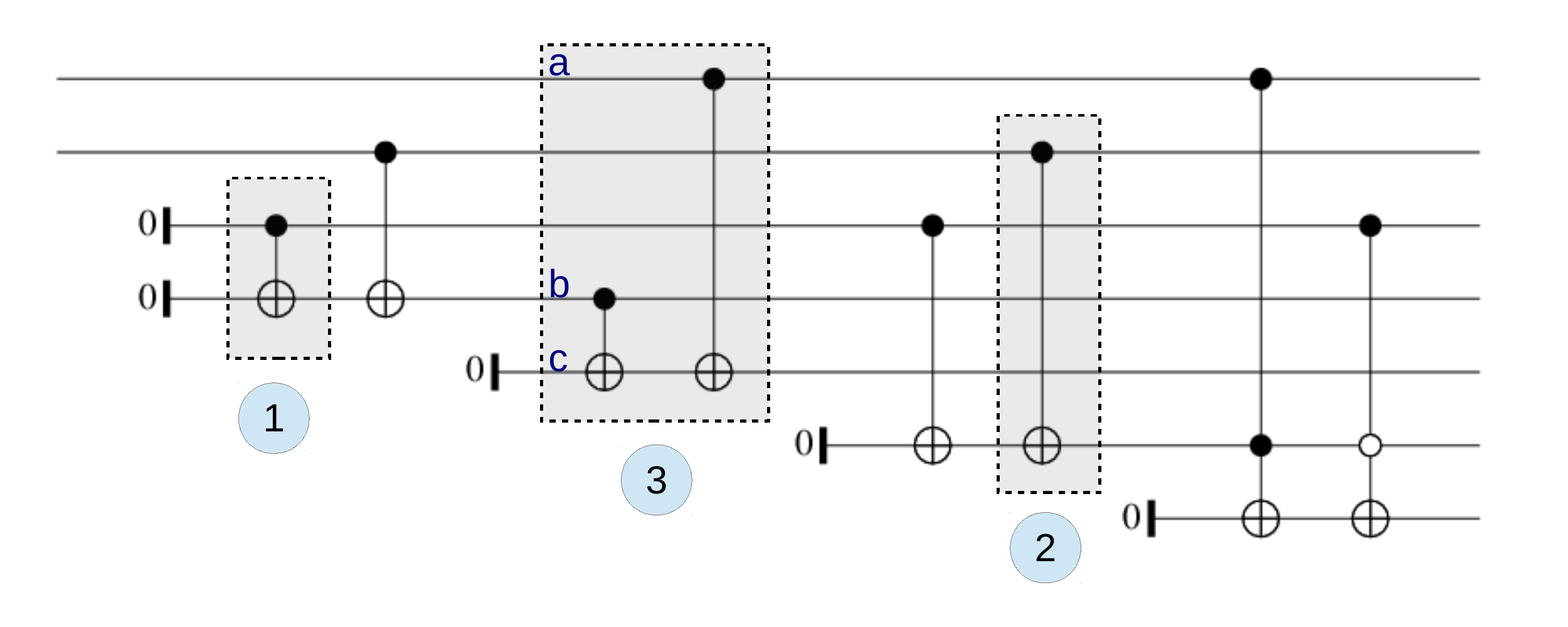}
  \caption{Adder of 1-bit integers, with potential optimizations highlighted}
\label{fig:optim}
\end{figure}
Three simple potential optimizations are highlighted.

In general, these optimizations require to have a knowledge of the
value of the bits flowing in the wires (e.g. Dashed Box 2). Since
there are input wires, this information needs to be kept in algebraic
form, as a function of the input wires. Of course, for non-trivial
circuits this means actually computing the circuit.

However, because of the shape of the generated circuit, instead of a
complete algebraic form, for the purpose of circuit simplification it
is often enough to keep only partial algebraic information about the
wires. To each piece of wire, we essentially keep a limited knowledge
of the past operations.

\smallskip
\noindent
{\it Dashed Box 1.}~ Gates acting on wires of known constant value. The
gate will never fire as the control will always be negative. The
gate can be removed.

From the perspective of the code the circuit comes from, this
situation typically occurs when constant booleans are manipulated, for
example with the term ${\tt xor}\,\ffalse\,\ffalse$.

\smallskip
\noindent
{\it Dashed Box 2.}~ Copy of one wire to another one.
Provided that the controlled wire is never controlled later on, one
can remove the gate and move all the controls and gates of the bottom
wire to the top wire.

The fact that the wire is not used later on means that the particular
intermediate result is never used again: From the point of view of the
program it means that this particular result is only used {\em
  linearly}. The typical case where this occurs is in a term such as
${\tt and}\,({\tt not}\,({\tt and}\,x\,y))\,z$. A garbage wire 
is created to hold the result of the ${\tt not}$, but this is not
needed as this intermediate result is not going to be reused. Instead
one can directly apply a not-gate on the result of ${\tt and}\,x\,y$.

\smallskip
\noindent
{\it Dashed Box 3.}~ Here, $c = a\,{\tt xor}\,b$. The two CNOTs can
be replaced with only one connecting wires $a$ and $b$, and one
could have removed wire $c$ altogether.
Again, some care must be taken: the new active wire should not be
controlled later on (ruling out wire $a$), and the controls and
actions of wire $c$ should be moved to the new active wire.

This situation can also be understood as a linearity constraint on the
program side.

\smallskip
\noindent
{\bf Reduction strategies and garbage wires.}~
The call-by-value reduction strategy we follow sometimes computes
unused intermediate results, therefore generating gates acting on
unused wires. One can safely discard such gates.

Note that the abstract machine is agnostic to the choice of reduction
strategy. In general, depending on the chosen reduction path, the generated
circuits do not have the same size and shape. Consider
the term
$
x:\bit\vdash (\lambda y.{\tt and}~y~y)({\tt not}~x):\bit. 
$
A call-by-value 
reduction strategy first evaluate the argument,
and then feeds the output
\begin{wrapfigure}{r}{0.3\textwidth}
\vspace{-20pt}
{\scriptsize$
\xymatrix@=.1cm{
  x\ar@{-}[r]&*{\circ}\ar@{-}[r]\ar@{-}[d] &x
  \\
  0\ar@{-}[r]&*{\oplus}\ar@{-}[r] & {\tt not}~x\ar@{-}[r]
  &*{\bullet}\ar@{-}[r]\ar@{-}[d] & {\tt not}~x
  \\
  &&
  0 \ar@{-}[r]&*{\oplus}\ar@{-}[r] & {\tt not}~x.
}
$}
\vspace{-20pt}
\end{wrapfigure}
value to the {\tt and} operator. Since having
the same variable $y$ means that the two controlling wires
collapsed, we get the circuit presented on the right:
${\tt not}~x$ is the output of the circuit, which is the meaning of 
the lambda-term.

With a call-by-name
strategy, the term instead
reduces to the following circuit. The
\begin{wrapfigure}{r}{0.45\textwidth}
\vspace{-20pt}
{\scriptsize$
\xymatrix@=.1cm{
  x\ar@{-}[r]&*{\circ}\ar@{-}[r]\ar@{-}[d] &*{}\ar@{-}[r]
  &*{\circ}\ar@{-}[r]\ar@{-}[dd] & x
  \\
  0\ar@{-}[r]&*{\oplus}\ar@{-}[r] & *{}\ar@{-}[r]
  &*{}\ar@{-}[r] & {\tt not}~x \ar@{-}[r]
  &*{\bullet}\ar@{-}[r]\ar@{-}[dd] & {\tt not}~x 
  \\
  &&
  0 \ar@{-}[r]&*{\oplus}\ar@{-}[r] & {\tt not}~x \ar@{-}[r]
  &*{\bullet}\ar@{-}[r] & {\tt not}~x
  \\
  &&&&0\ar@{-}[r]
  &*{\oplus}\ar@{-}[r] & ({\tt not}~x)~{\tt and}~({\tt not}~x)\hspace{-1.5cm}
}\hspace{1.5cm}
$}
\vspace{-20pt}
\end{wrapfigure}
last
wire is the output wire, the other wires are garbage
wires. Of course, we get the same result: $({\tt not}~x)~{\tt
  and}~({\tt not}~x)$ and ${\tt not}~x$ are indeed the same boolean
value. However, note that the circuit is different than the one
generated by a call-by-value strategy. In general, call-by-name tends
to generate larger circuits as arguments are duplicated and evaluated
several times. The case where it is not true is when the argument of a
function is not used: in call-by-value, the argument would generate a
piece of circuit, whereas in call-by-name, since it would not be
evaluated, the argument would leave no trace on the circuit.

\begin{figure}[tb]
  \centering
  \begin{subfigure}[b]{0.4\textwidth}
    \includegraphics[width=\textwidth]{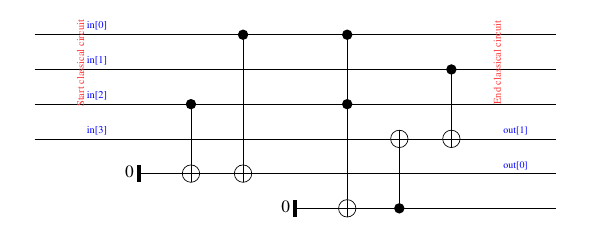}
    \caption{Circuit waiting for shuffling.}
    \label{fig:shuffling-1}
  \end{subfigure}%
  ~
  \begin{subfigure}[b]{0.4\textwidth}
    \includegraphics[width=\textwidth]{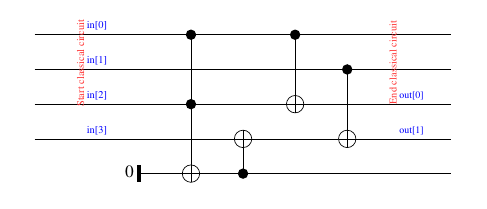}
    \caption{After the shuffling.}
    \label{fig:shuffling-2}
  \end{subfigure}%
  \caption{Circuit optimized with and without shuffle.}
  \label{fig:shuffling}
\end{figure}

\smallskip
\noindent
{\bf Optimizations by shuffling.}~
A less obvious circuit modification is to send CNOT gates as far as
possible to the right, by swapping the order of gates. This is again a
side-effect of the particular shape of the generated circuit.

If it does not decrease the size the circuit, it is able to reveal
places where algebraic optimizations can be performed. For example,
consider the circuit in Figure~\ref{fig:shuffling-1}. The two first
CNOTs can be moved to the far right-end, becoming a hidden
CNOT (as Dashed Box 2 of Figure~\ref{fig:optim})\,:
they are merged into one single CNOT and the first auxiliary wire
is removed. We get the circuit in Figure~\ref{fig:shuffling-2}.

The corresponding program transformation modifies the term
\[
\lettermx{z_1}{{\tt xor}\,x\,y}{\lettermx{z_2}{f(x,y)}{g(z_1,z_2)}}
\]
to
\[\lettermx{z_2}{f(x,y)}{}\lettermx{z_1}{{\tt xor}\,x\,y}{
g(z_1,z_2)}.
\]
As $g$ does not use $x$ nor $y$, one of the algebraic optimization
might apply.

\begin{figure}[tb]
  \centering
  \includegraphics[width=2in]{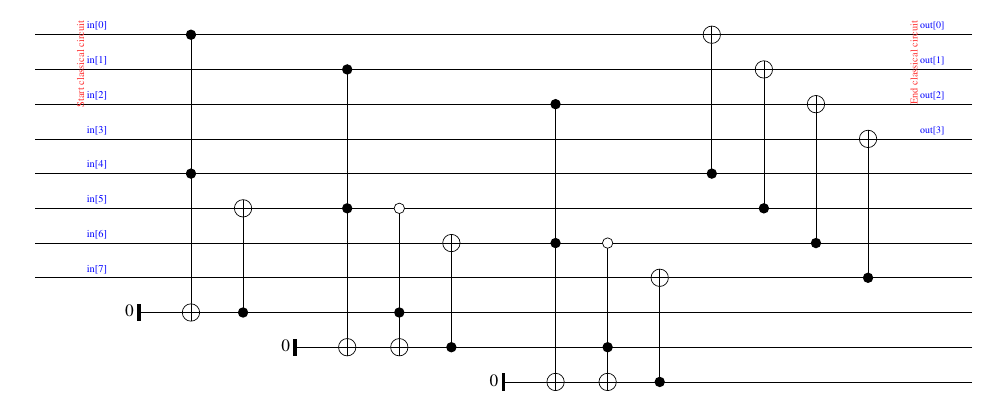}
  \caption{Reversible adder for 4-bit integers, optimized.}
  \label{fig:adder-4-optim}
\end{figure}

\begin{example}
  By applying these optimization schemes on the reversible adder of
  Figure~\ref{fig:adder-4-raw}, one gets the circuit presented in
  Figure~\ref{fig:adder-4-optim}. One can now clearly see the
  carry-ripple structure, and it is in fact very close to known
  reversible ripple-carry adders (see
  e.g.~\cite{feynman85quantum}). These optimizations were implemented
  in Quipper: applied on larger circuits such as the ones of
  Example~\ref{ex:sin}, we get in general a size reduction by a factor
  of 10.
\end{example}

\section{Conclusion and future work}
\label{sec:conclusion}
In this paper, we presented a simple and scalable mechanism to turn a
higher-order program acting on booleans into into a family of
reversible circuits using a monadic semantics. The main feature of
this encoding is that an automatically-generated circuit is guaranteed
to perform the same computation as the original program.
The classical description we used is a small PCF-like language, but it
is clear from the presentation that another choice of language can be
made. In particular, an interesting question is whether it is possible
to use a language with a stronger type system for proving
properties of the encoded functions.

A second avenue of research is the question of the parallelization of
the generated circuits. The circuits we produce are so far completely
linear. Following the approach in~\cite{ghica07gos}, using parallel
higher-order language might allow one to get parallel reversible
circuits, therefore generating circuits with smaller depths.

Finally, the last avenue for research is the design of generic
compiler with a dedicated code optimizations.  Indeed, an analysis of
the specific optimizations described in Section~\ref{sec:optim}
suggests that these could be designed at the level of code, therefore
automatically generating leaner circuits up front. This opens the door
to the design of specific type systems and code manipulations in a
future full compiler.  and back-end, specific circuit optimizations.

\end{document}